\numberwithin{equation}{section}
\theoremstyle{plain}
\newtheorem{theorem}{Theorem}[section]
\newtheorem{proposition}[theorem]{Proposition}
\newtheorem{lemma}[theorem]{Lemma}
\newtheorem{corollary}[theorem]{Corollary}
\theoremstyle{definition}
\newtheorem{definition}[theorem]{Definition}
\newtheorem{example}[theorem]{Example}
\theoremstyle{remark}
\newtheorem{remark}[theorem]{Remark}
\newtheoremstyle{com}{}{}{\color{blue}}{}{\color{blue}}{}{ }{}
\theoremstyle{com}
\newcommand{\dd}{\,\mathrm{d}}
\newcommand{\E}{\mathbb{E}}
\newcommand{\R}{\mathbb{R}}
\newcommand{\N}{\mathbb{N}}
\renewcommand{\d}{\mathrm{d}}
\renewcommand{\P}{\mathbb{P}}
\newcommand{\Q}{\mathbb{Q}}
\renewcommand{\H}{\mathcal{H}}
\newcommand{\F}{\mathbb{F}}
\newcommand{\Fc}{\mathcal{F}}
\newcommand{\Vc}{\mathcal{V}}
\newcommand{\NN}{\mathcal{NN}}
\newcommand{\Pc}{\mathcal{P}}
\newcommand{\Xc}{\mathcal{X}}
\renewcommand{\Mc}{\mathcal{M}}
\newcommand{\activF}{\sigma}
\title{Deep Hedging}
\begin{document}

\frenchspacing

\author[Buehler]{Hans Buehler}
\address{Hans B\"uhler, J.P. Morgan, London}\footnote{Opinions expressed in this paper are those of the authors, and do not necessarily reflect the view of JP Morgan.}
\email{hans.buehler@jpmorgan.com}

\author[Gonon]{Lukas Gonon}
\address{Lukas Gonon, Eidgen\"ossische Technische Hochschule Z\"urich, Switzerland}
\email{lukas.gonon@math.ethz.ch}

\author[Teichmann]{Josef Teichmann}
\address{Josef Teichmann, Eidgen\"ossische Technische Hochschule Z\"urich, Switzerland}
\email{josef.teichmann@math.ethz.ch}

\author[Wood]{Ben Wood}
\address{Ben Wood, J.P. Morgan, London }
\email{ben.wood@jpmorgan.com}

\date{\today}

\begin{abstract}

  We present a framework for hedging a portfolio of derivatives in the presence of
  market frictions such as transaction costs, market impact, liquidity constraints or risk limits
  using modern deep reinforcement machine learning methods. \\
  
  We discuss how standard reinforcement learning methods can be applied to non-linear reward
  structures, i.e.~in our case convex risk measures.
  As a general contribution to the use of deep learning for stochastic processes,
  we also show in section~\ref{sec:NeuralNets} that the set of constrained trading strategies
  used by our algorithm is large enough to $\epsilon$-approximate any optimal
  solution. \\
    
    Our algorithm can be implemented efficiently even in high-dimensional situations using modern machine learning tools.
  Its structure does not depend on specific market dynamics, and generalizes across hedging instruments including
  the use of liquid derivatives.  Its computational performance is largely invariant in the size of the portfolio
  as it depends mainly on the number of hedging instruments available.\\
    
  We illustrate our approach by showing the effect on hedging
  under transaction costs in a synthetic market driven by the Heston model, where we outperform
  the standard ``complete market" solution.
\end{abstract}

\maketitle
\frenchspacing

\noindent\textbf{Key words and phrases:}  reinforcement learning, approximate dynamic programming, machine learning, market frictions, transaction costs, hedging, risk management, portfolio optimization. \\
\textbf{MSC 2010 Classification: 91G60, 65K99}\\

\noindent

\section{Introduction}

The problem of pricing and hedging portfolios of derivatives is crucial for pricing
risk-management in the financial securities industry. In idealized
frictionless and ``complete market" models, mathematical finance provides,
with risk neutral pricing and hedging, a tractable solution to this
problem. Most commonly, in such models  only the primary asset such as the equity and few additional factors are modeled.
Arguably, the most successful such model for equity models is Dupire's Local Volatility~\cite{Dupire1994}.
For risk management, we will then compute ``greeks" with respect not only to spot,
but also to calibration input parameters such as forward rates and implied volatilities - even if
such quantities are not actually state variables in the underlying model. Essentially, the models are used as a form of 
low dimensional 
interpolation of the hedging instruments. Under complete market assumptions, pricing and risk of a portfolio of derivatives
is linear.

In real markets, though, trading in any instrument is subject to transaction
costs, permanent market impact and liquidity constraints. Furthermore, any trading desk is typically also limited
by its capacity for risk and stress, or more generally capital. This requires traders to overlay the trading
strategy implied by the greeks computed from the complete-market model with
their own adjustments. It also means that pricing and risk are not
linear, but dependent on the overall book: a new trade which reduces the risk in a particular direction can be priced
more favourably. This is called having an ``axe".

The prevalent use of the ``complete market" models  is due to a lack of efficient alternatives; even with the impressive progress
made in the last years for example around super-hedging, there are still few solutions which will scale well over a large
portfolio of instruments, and which do not depend on the underlying market dynamics.\\

Our \emph{deep hedging} approach addresses this deficiency. Essentially, we model the trading decisions in our hedging strategies
as neural networks; their \emph{feature sets} consist not only of prices of our hedging instruments,
but may also contain additional information such as trading signals, news analytics, or past hedging decisions --
quantitative information a human trader might use, in true machine learning fashion.

Such deep hedging
strategies can be described and trained (optimized in classical
language) in a very efficient way, while the respective algorithms
are entirely model-free and do not depend on the
on the chosen market dynamics. That means we can  include market
frictions such as transaction costs, liquidity constraints, bid/ask
spreads, market impact etc, all potentially dependent on the features of the scenario.

The modeling task now amounts to specifying a market scenario generator, a
loss function, market frictions
and trading instruments. This 
approach lends itself well to statistically driven market dynamics. That also means that we do not need to be able to compute greeks of individual derivatives with
a classic derivative pricing model. In fact, we will need no such ``equivalent martingale model". 
\emph{Our approach is greek-free}. Instead,
we can focus our modeling effort on realistic market dynamics and the actual out-of-sample performance of our hedging signal.

High level optimizers then find reasonably
good strategies to achieve good out-of-sample hedging performance under the stated objective. 
In our examples, we are using gradient descent ``Adam" \cite{Kingma2015} mini-batch training for a 
semi-recurrent reinforcement learning problem.
\\

To illustrate our approach, we  will build on ideas from \cite{Ilhan2009} and
\cite{Foellmer2000} and optimize hedging of a portfolio of derivatives under \emph{convex risk measures}. To be able to compare
our results with classic complete market results, we chose in this article to drive the market with a Heston model. We re-iterate that
our algorithm is not dependent on the choice of the model.

To illustrate our algorithm, we investigate the following questions:
\vspace{-1mm}
\begin{itemize}
\item Section~\ref{subsec:benchmark}: How does neural network hedging
  (for different risk-pre\-ferences) compare to the benchmark in a
  Heston model without transaction costs?
\item Section~\ref{subsec:asymptotics}: What is the effect of
  proportional transaction costs on the exponential utility
  indifference price?
\item Section~\ref{subsec:highD}: Is the numerical method scalable to
  higher dimensions?
\end{itemize}
\vspace{-1mm}
Our analysis is based on out-of-sample performance.

To calculate our hedging strategies numerically, we approximate
them by deep neural networks. State-of-the-art
machine learning optimization techniques (see \cite{Goodfellow2016})
are then used to train these networks, yielding a close-to-optimal
\textit{deep hedge}. This is implemented in {\sc Python} using {\sc TensorFlow}.
Under our Heston model,
trading is allowed in both stock and a variance swap. Even experiments
with proportional transaction costs show promising results and the
approach is also feasible  in a high-dimensional setting.

\subsection{Related literature}

There is a vast literature on hedging in market models with
frictions. We only highlight a few to demonstrate the complex
character of the problem.  For example, \cite{Rogers2010} study a
market in which trading a security has a (temporary) impact on its
price. The price process is modelled by a one-dimensional
Black-Scholes model. The optimal trading strategy can be obtained by
solving a system of three coupled (non-linear) PDEs.  In
\cite{Bank2017} a more general tracking problem (covering the
temporary price impact hedging problem) is carried out for a Bachelier
model and a closed form solution (involving conditional expectations
of a time integral over the optimal frictionless hedging strategy) is
obtained for the strategy. \cite{Soner1995} prove that in a
Black-Scholes market with proportional transaction costs, the cheapest
superhedging price for a European call option is the spot price of the
underlying. Thus, the concept of super-replication is of little
interest to practitioners in the one dimensional case. In higher
dimensional cases it suffers from numerical intractability.

It is well known that deep feed forward networks satisfy universal approximation properties, see, e.g.,
\cite{Hornik1991}.
To understand better why they
 are so efficient at approximating hedging strategies, we rely on the very recent and fascinating results of
\cite{Grohs2017}, which can be stated as follows: they quantify the
minimum network connectivity needed to allow approximation of
\emph{all} elements in pre-specified classes of functions to within a
prescribed error, which establishes a universal link between the
connectivity of the approximating network and the complexity of the
function class that is approximated. An abstract framework for
transferring optimal $M$-term approximation results with respect to a
\emph{representation system} to optimal $M$-edge approximation
results for neural networks is established. These transfer results
hold for dictionaries that are \emph{representable by neural
  networks} and it is also shown in \cite{Grohs2017} that a
wide class of representation systems, coined \emph{affine systems},
and including as special cases wavelets, ridgelets, curvelets,
shearlets, $\alpha$-shearlets, and more generally, $\alpha$-molecules,
as well as tensor-products thereof, are re-presentable by neural
networks. These results suggest an explanation for
the ``unreasonable effectiveness'' of neural networks: they
effectively combine the optimal approximation properties of all affine
systems taken together. In our  application of deep hedging
strategies this  means: understanding the relevant input factors
for which the optimal hedging strategy can be written efficiently.

There are several related  applications of reinforcement learning in finance
which have similar challenges, of which we want to highlight two related streams: 
the first is the application to classic portfolio optimization,
i.e.~without options and under the assumption that market prices are available for all 
hedging instruments. As in our setup, this problem requires the use of non-linear objective
functions, c.f.~for example~\cite{cashpf}  or~\cite{rfPf}.
The second promising application of reinforcement learning is in algorithmic trading, where
several authors have shown promising results, e.g.~\cite{AlgoQTrading} and~\cite{AlgoLSTM}
to give but two examples.

The novelty in this article is that we cover derivatives in the first place, and in particular over-the-counter
derivatives which do not have an observable market price.
For example,~\cite{BSQ} covers hedging using Q-learning with only the stock price under Black\&Scholes assumptions
and without transaction cost.

   This puts our article firmly in the realm
of pricing and risk managing a contingent claims in incomplete markets with friction cost.
A general introduction into quantitative finance with a focus on such markets is~\cite{Foellmer2016}.

\subsection{Outline}

The rest of the article is structured as follows. In
Sections~\ref{sec:Setting} and \ref{sec:Hedging} we provide the
theoretical framework for pricing and hedging using convex risk
measures in discrete-time markets with
frictions. Section~\ref{sec:NeuralNets} outlines the parametrization
of appropriate hedging strategies by neural nets and provides
theoretical arguments why it works. In Section~\ref{sec:Heston}
several numerical experiments are performed demonstrating the
surprising feasibility and accuracy of the method.

\section{Setting: Discrete time-market with Frictions}\label{sec:Setting}
Consider a discrete-time financial market with finite time horizon $T$
and trading dates $0=t_0 < t_1 < \ldots < t_n = T$. Fix a  finite\footnote{
The assumption that $\Omega$ is finite is only
  essential for the numerical solution of the optimal hedging problem
  (from Section~\ref{subsec:numericalSol} onwards). Alternatively, we
  could start with arbitrary $\Omega$ and discretize it for the
  numerical solution. If we imposed appropriate integrability
  conditions on all assets and contingent claims, then the results prior to
  section~\ref{subsec:numericalSol} would remain valid for general
  $\Omega$.}
probability space 
$\Omega=\{\omega_1,\ldots,\omega_N\}$ and a probability measure~$\P$ such that
$\P[\{\omega_i\}]>0$ for all $i$.  We define the set of all real-valued random variables
over~$\Omega$ as
$\Xc:=\{X \colon \Omega \to \R \}$.

We denote by~$I_k$ with values in~$\R^r$
any new market information available at time~$t_k$, including market costs and mid-prices of liquid instruments -- typically quoted
in auxiliary terms such as implied volatilities --,
news, balance sheet information, any trading signals, risk limits etc. The process~$I=(I_k)_{k=0,\ldots,n}$ generates the filtration~$\F=(\Fc_k)_{k=0,\ldots,n}$,
i.e.~$\Fc_k$ represents all information available up to~$t_k$. Note that each~$\Fc_k$-measurable random variable
can be written as a function of~$I_0,\ldots,I_k$; this is therefore the richest available feature set for any
decision taken at~$t_k$.

The market contains $d$ hedging instruments with mid-prices given
by an $\R^d$-valued $\F$-adapted stochastic process
$S=(S_k)_{k=0,\ldots,n}$. We do \emph{not} require that there
is an equivalent martingale measure under which $S$ is a martingale.
We stress that our hedging instruments are not simply primary assets such as equities, but
also secondary assets such as liquid options on the former. Some of those hedging instruments are therefore not tradable before
a future point in time (e.g.~an option only listed in 3M with then time-to-maturity of 6M). Such liquidity restrictions
are modeled alongside trading cost below. 

Our portfolio of derivatives which represents our liabilities is an $\Fc_T$ measurable
random variable $Z$. In keeping with the classic literature we may refer to this as
the \emph{contingent claim}, but
we stress that it is meant to represent a portfolio which is a mix of liquid and OTC derivatives.
The maturity~$T$ is  
the maximum maturity of all instruments, at which point
all payments are known.

\noindent
\emph{No classic derivative pricing model will be  needed to valuate~$Z$ or compute Greeks at any point.}

\subsubsection*{Simplifications}

For notational simplicity, we assume
that all intermediate payments are accrued using a (locally) risk-free overnight rate. This essentially means
we may assume that rates are zero and that all payments occur at~$T$. 
We also exclude for the purpose of this article instruments
 with true optionality such as American options. Finally, we also assume that all currency spot exchange happens
 at zero cost, and that we therefore may assume that all instruments settle in our reference currency.\footnote{See \cite{BuRu06}
 for some background on multi-currency risk measures.}

\subsubsection*{Trading Strategies}

In order to hedge a liability~$Z$ at~$T$, we may trade in~$S$ using an $\R^d$-valued $\F$-adapted stochastic
process $\delta=(\delta_{k})_{k=0,\ldots,n-1}$ with~$\delta_k=(\delta^1_k,\ldots,\delta_k^d)$.
Here,~$\delta^i_k$ denotes the agent's holdings of the $i$th asset at time~$t_k$.
We may also define $\delta_{-1}=\delta_n:=0$ for notational convenience. 

We denote by $\H^u$ the unconstrained set of such trading strategies. However, each $\delta_k$ is subject to additional trading constraints. Such restrictions arise due to liquidity, asset availability or trading restrictions.
They are also used to restrict trading in a particular option prior to its availability. In the example above of an option
which is listed in~3M, the respective trading constraints would be~$\{0\}$ until the 3M point. To incorporate these effects, 
we assume that~$\delta_k$ is restricted to a set~$\H_k$ which is given as the image of a continuous, $\Fc_k$-measurable 
map~$H_k:\R^{d(k+1)} \rightarrow \R^d$, i.e.~$\H_k := H_k(\R^{d(k+1)})$. We stipulate that~$H_k(0) = 0$.

Moreover, for an unconstrained strategy~$\delta^u\in\H^u$, we (successively) define with~$(H\circ \delta^u)_k := H_k((H \circ \delta^u)_0,\ldots,(H \circ \delta^u)_{k-1},\delta^u_k)$
its constrained ``projection" into~$\H_k$. 
We denote by~$\H:=(H \circ \H^u) \subset \H^u$ the corresponding non-empty set
of restricted trading strategies.

\begin{example}
	Assume that~$S$ are a range of options and that~$\Vc^{i}_k(S^i_k)$ computes the Black \& Scholes Vega of each option
	using the various market parameters available at time~$t_k$.
	The overall Vega traded with~$\delta_k$ is then~$\Vc_k(\delta_k-\delta_{k-1}) := | \sum_{i=1}^d \Vc^i_k(S^i_k) 
	(\delta^i_k - \delta^i_{k-1}) |$.
	A~liquidity limit of a maximum tradable Vega of~$\Vc_{\mathrm{max}}$ could then be implemented by
	the map:
	\[ 
		H_k(\delta_0,\ldots,\delta_k) := \delta_{k-1} + (\delta_k-\delta_{k-1}) \frac{ \Vc_{\mathrm{max}} }{ \max\{ \Vc_k(\delta_k-\delta_{k-1}), \Vc_{\mathrm{max}} \} } \ .
	\] 
\end{example}

\subsubsection*{Hedging}
All trading is self-financed, so we may also need to inject additional cash~$p_0$ into our portfolio. A negative
cash injection implies we may extract cash.
In a market without transaction costs the agent's wealth at time $T$
is thus given by
$
	-Z + p_0 + (\delta\cdot S)_T 
$,
 where
\[
 (\delta\cdot S)_T := \sum_{k=0}^{n-1} \delta_{k} \cdot
  (S_{k+1}-S_{k}). 
\]

      \noindent
However, we are interested in situations
where trading cost cannot be neglected. We assume that
any trading activity causes costs as follows: if the agent decides to
buy a position $\mathrm{n}\in\R^d$ in $S$ at time $t_k$, then this
will incur cost $c_k(\mathrm{n})$. The total cost of trading a strategy $\delta$
up to maturity is therefore
\[
	C_T(\delta) := \sum_{k=0}^n c_k( \delta_k- \delta_{k-1} ) 
\]
(recall $\delta_{-1}=\delta_n := 0$, the latter of which implies full liquidation in $T$).
\noindent
 The agent's terminal portfolio value at $T$ is therefore
\begin{equation}
  \label{eq:terminalPL} \mathrm{PL}_T(Z,p_0,\delta):=
  -Z + p_0 + (\delta\cdot S)_T-C_T(\delta).\end{equation}

Throughout, we assume that the non-negative adapted cost functions are normalized to
$c_k(0)=0$ and that they are
upper semi-continuous.\footnote{This property is needed in the proof of proposition~\ref{lem:approx}.} In our numerical examples we have
assumed zero transaction costs at maturity.

Our setup includes the following effects:
  \begin{itemize}
  \item Proportional transaction cost: for for $c_k^{i}>0$ define
  		$
  			c_k(\mathrm{n}) := \sum_{i=1}^d c_k^{i} \,S_k^{i} |\mathrm{n}^{i} | 
  		$.
  		
  \item Fixed transaction costs: for $c_k^{i}>0$ and $\varepsilon>0$ set
  		$
  			c_k(\mathrm{n}) := \sum_{i=1}^d c_k^{i} 1_{|\mathrm{n}^{i} |\geq \varepsilon}  			
  		$.

  \item Complex cross-asset cost, such as cost of volatility when trading options across the surface:
  		assume $S^{1}$ is spot and that the rest of the hedging instruments are
  		options on the same asset. Denote by~$\Delta_k^{i}$  Delta
  		and by~$\Vc^{i}_k$  Vega of each instrument, for example under a simple Black \& Scholes model.
  		
  		We may then define a simple cross-surface proportional cost model in Delta and Vega for~$c_k>0$
  		and~$v_k>0$ as
  		\[
  			c_k(\mathrm{n}) :=  c_k^{i} S_k^{1} \left| 1 + \sum_{i=2}^d \Delta^{i}_k   \mathrm{n}^{i} \right| 
  			+
  			v_k^{i} \left| \sum_{i=2}^d  \Vc^{i}_k  \mathrm{n}^{i} \right|
  		\]
 
    \end{itemize} 

\begin{remark}
Our general setup also allows modeling true market impact: in this case, the asset distribution is
affected by our trading decisions. 

As an example for permanent market impact, assume for simplicity that~$I=S$ and
that we have a statistical model of our market in the form of a conditional
distribution $P( S_{k+1} | S_{k} )$. For a proportional impact parameter~$\iota>0$ 
we may now define the dynamics of~$S$ under exponentially
decaying, proportional market impact as $P\left(\ S_{k+1}\ \big|\ S_k \left( 1 + 
	\iota (\delta_k - \delta_{k-1})\right) \ \right)$.
The cost function is accordingly~$c_k(\mathrm{n}) := S_k \iota |\mathrm{n}|$. 

In a similar vein, dynamic  market impact with  decay such as described in~\cite{GaSch2013}
can be implemented.

The real challenge with modeling impact is the effect of trading in one hedging instrument on other hedging instruments,
for example when trading options.
\end{remark}

\section{Pricing and hedging using convex risk measures}\label{sec:Hedging}
In an idealized complete market with continuous-time
trading, no transaction costs, and unconstrained hedging, for any 
liabilities $Z$ there exists a unique replication strategy $\delta$ and a fair price $p_0 \in \R$ such that
$-Z + p_0 + (\delta \cdot S)_T - C_T(\delta) = 0$ holds $\P$-a.s. This is not true in our current setting.

In an incomplete market with frictions, an agent  has to specify an optimality criterion which
defines an acceptable ``minimal price" for any position. 
Such a minimal price is the going to be the minimal amount of cash we need to add to our position in order to implement the optimal 
hedge and such that the overall position becomes acceptable in light of the various costs and constraints.

We focus here on optimality under \emph{convex risk measures} as studied e.g.~in
\cite{Xu2006} and \cite{Ilhan2009}. See also \cite{Kloeppel2007} and
further references therein for a dynamic setting. Convex risk measures
are discussed in great detail in \cite{Foellmer2016}.

\begin{definition}
	Assume that~$X, X_1, X_2 \in \Xc$ represent asset positions (i.e.,~$-X$ is a liability).
	
	\noindent
	We call~$\rho:\Xc\rightarrow\R$ a \emph{convex risk measure} if it is:
	\begin{enumerate}
		\item Monotone decreasing:~if $X_1\geq X_2$ then $\rho(X_1) \leq \rho(X_2)$.
		
			\noindent
			\textit{A more favorable position requires less cash injection}.

		\item Convex: $\rho(\alpha X_1 + (1-\alpha) X_2) \leq \alpha \rho(X_1) + (1-\alpha) \rho(X_2)$ for~$\alpha\in [0,1]$.
		
			\noindent
			\textit{Diversification works}.
			
		\item Cash-Invariant: $\rho(X + c) = \rho(X) - c$ for~$c\in\R$.
		
			\noindent
			\textit{Adding cash to a position reduces the need for more by as much.
			 In particular, this means that~$\rho(X + \rho(X))
			= 0$,
			i.e.~$\rho(X)$ is the least amount~$c$ that needs to be added to 
			the position $X$ in order to make it acceptable in
			the sense that~$\rho(X+c) \leq 0$.}
	\end{enumerate}
	We call~$\rho$ \emph{normalized} if~$\rho(0) = 0$.
\end{definition}

Let $\rho \colon \Xc \to \R$ be such a convex risk measure and for $X \in \Xc$ consider the optimization problem 
\begin{equation}\label{eq:objective1} 
	\pi(X) := \inf_{\delta \in \H}
  \rho\!\left(X + (\delta \cdot S)_T - C_T(\delta)\right) \ .
\end{equation}

\begin{proposition} 
	$\pi$ is monotone decreasing and cash-invariant.
	
	If moreover $C_T(\cdot)$ and $\H$ are convex, then the functional~$\pi$ is a convex risk measure. 
\end{proposition}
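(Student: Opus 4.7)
The plan is to verify each of the three defining properties of a convex risk measure for $\pi$ by exploiting the corresponding property of $\rho$ together with the linearity of the stochastic integral $\delta \mapsto (\delta \cdot S)_T$ in $\delta$, and then invoking the stated convexity of $C_T$ and $\H$ for the convexity property. The infimum over $\delta \in \H$ is the only new ingredient beyond the properties of $\rho$, and the standard trick ``fix a $\delta$, apply the property of $\rho$, then take infimum'' handles monotonicity and cash-invariance; convexity requires an extra step of choosing a common strategy built from two candidates.

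For monotonicity, I would fix $X_1 \geq X_2$ and any $\delta \in \H$. Since $(\delta \cdot S)_T - C_T(\delta)$ does not depend on $X$, adding it preserves the pointwise inequality, so monotonicity of $\rho$ yields $\rho(X_1 + (\delta\cdot S)_T - C_T(\delta)) \leq \rho(X_2 + (\delta \cdot S)_T - C_T(\delta))$. Taking $\inf_{\delta \in \H}$ on both sides gives $\pi(X_1) \leq \pi(X_2)$. For cash-invariance, for any $c \in \R$ I would write
\[
\pi(X+c) = \inf_{\delta \in \H} \rho\!\left((X + (\delta \cdot S)_T - C_T(\delta)) + c\right) = \inf_{\delta \in \H}\bigl[\rho(X + (\delta \cdot S)_T - C_T(\delta)) - c\bigr] = \pi(X) - c,
\]
using cash-invariance of $\rho$ and the fact that subtracting a constant commutes with $\inf$.

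For convexity, fix $X_1, X_2 \in \Xc$ and $\alpha \in [0,1]$, and pick arbitrary $\delta_1, \delta_2 \in \H$. By the assumed convexity of $\H$, the strategy $\delta := \alpha \delta_1 + (1-\alpha)\delta_2$ lies in $\H$. Linearity of the discrete stochastic integral in the integrand gives $(\delta \cdot S)_T = \alpha (\delta_1 \cdot S)_T + (1-\alpha)(\delta_2 \cdot S)_T$, and convexity of $C_T$ yields $-C_T(\delta) \geq -\alpha C_T(\delta_1) - (1-\alpha) C_T(\delta_2)$. Setting $Y_i := X_i + (\delta_i \cdot S)_T - C_T(\delta_i)$ we therefore have $\alpha X_1 + (1-\alpha) X_2 + (\delta \cdot S)_T - C_T(\delta) \geq \alpha Y_1 + (1-\alpha) Y_2$ pointwise. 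Monotonicity and convexity of $\rho$ then give
\[
\rho\!\left(\alpha X_1 + (1-\alpha) X_2 + (\delta \cdot S)_T - C_T(\delta)\right) \leq \rho(\alpha Y_1 + (1-\alpha) Y_2) \leq \alpha \rho(Y_1) + (1-\alpha) \rho(Y_2).
\]
The left-hand side dominates $\pi(\alpha X_1 + (1-\alpha) X_2)$, and taking $\inf$ over $\delta_1$ and $\delta_2$ independently on the right yields $\pi(\alpha X_1 + (1-\alpha) X_2) \leq \alpha \pi(X_1) + (1-\alpha) \pi(X_2)$.

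I do not anticipate any serious obstacle: there is no measurability or integrability issue because $\Omega$ is finite, and the existence of a minimizer is not needed since we work with $\inf$ throughout. The only place where care is required is to notice that one must chain monotonicity and convexity of $\rho$ together in the convexity step (the affine combination of $Y_i$ is not directly what appears under $\rho$; the pointwise inequality coming from the $-C_T$ term must first be absorbed via monotonicity before convexity of $\rho$ can be applied).
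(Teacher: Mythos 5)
Your proof is correct and follows essentially the same route as the paper's: monotonicity and cash-invariance are passed through the infimum directly from the properties of $\rho$, and convexity is obtained by forming $\alpha\delta_1+(1-\alpha)\delta_2\in\H$, using linearity of $(\delta\cdot S)_T$, absorbing the convexity of $C_T$ via monotonicity of $\rho$, then applying convexity of $\rho$ and taking infima over $\delta_1,\delta_2$ separately. Your remark that one must chain monotonicity before convexity of $\rho$ is exactly the step the paper performs in its third inequality.
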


\begin{proof}
 For convexity, let~$\alpha\in [0,1]$, set~$\alpha':=1-\alpha$ and assume
	$X_1, X_2 \in \Xc$. Then using the definition of $\pi$ in the first step, convexity of $\H$ in the second step, convexity of $C_T(\cdot)$ combined with monotonicity of $\rho$ in the third step and convexity of $\rho$ in the fourth step, we obtain
	\begin{eqnarray*}
		&& \pi(\alpha X_1 + \alpha' X_2) \\
		&& = \
		 \inf_{\delta \in \H}
  \rho\!\left( \alpha  X_1  + \alpha' X_2 + (\delta \cdot S)_T - C_T(\delta) \right) \\
  && = \
		 \inf_{\delta_1, \delta_2 \in \H}  \rho\!\left( \alpha\left\{  X_1  + (\delta_1 \cdot S)_T \right\} + 
  \alpha' \left\{  X_2  + (\delta_2 \cdot S)_T \right\} -  C_T(\alpha \delta_1 + \alpha' \delta_2 )\right)  \\
  && \leq \
		 \inf_{\delta_1, \delta_2 \in \H}  \rho\!\left( \alpha\left\{  X_1  + (\delta_1 \cdot S)_T - C_T(\delta_1) \right\} + 
  \alpha' \left\{  X_2  + (\delta_2 \cdot S)_T - C_T(\delta_2) \right\}\right) \\
  && \leq \
		 \inf_{\delta_1, \delta_2 \in \H} \left\{ 
  \alpha \rho\!\left( X_1  + (\delta_1 \cdot S)_T - C_T(\delta_1) \right) + 
  \alpha'\rho\!\left( X_2  + (\delta_2 \cdot S)_T - C_T(\delta_2) \right) \right\} \\
  && = \
		 \alpha \pi(X_1) + \alpha' \pi(X_2) \ .
  \end{eqnarray*}	
  Cash-invariance and monotonicity follow directly from the respective properties of~$\rho$. 
\end{proof}

We define an optimal hedging strategy as a minimizer $\delta \in  \H$ of \eqref{eq:objective1}. Recalling the interpretation of 
$\rho(-Z)$ as the minimal amount of capital that has to be added to the risky position $-Z$ to make it acceptable for the risk measure $\rho$, this means that $\pi(-Z)$ is simply the minimal amount that the agent needs to charge in order to make her terminal position acceptable, if she hedges optimally. 

If we defined this as the minimal price, then we would exclude the possibility that having no liabilities may actually have
positive value. This might be the case in the presence of statistically positive expectation of returns under~$\P$ for
some of our hedging instruments. As mentioned before, our framework lends itself to the integration of signals and other
trading information.
We therefore define the \textit{indifference price} $p(Z)$ as the amount of cash that she needs to charge in order to be indifferent between the position $-Z$ and not doing so, i.e. as the solution $p_0$ to $\pi(-Z+p_0)=\pi(0)$.
By cash-invariance this is equivalent to taking $p_0 := p(Z)$, where
\begin{equation}\label{eq:priceDef} p(Z) := \pi(-Z) - \pi(0) \ . 
\end{equation}

It is easily seen that without trading restrictions and transaction costs, this price coincides with the price of a replicating portfolio (if it exists): 
\begin{lemma}  
Suppose $C_T \equiv 0$ and $\H=\H^u$. If $Z$ is
  attainable, i.e. there exists $\delta^* \in \H$ and $p_0 \in \R$ such
  that $Z=p_0+(\delta^* \cdot S)_T$, then $p(Z) = p_0$.
\end{lemma}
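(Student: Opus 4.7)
The plan is to unwind the definitions of $\pi$ and $p$ using the attainability hypothesis, and then exploit two key structural properties: the cash-invariance of $\rho$ and the fact that, in the unconstrained setting, translating the set of admissible strategies by a fixed element of $\H^u$ leaves the set invariant.

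First I would substitute $Z = p_0 + (\delta^* \cdot S)_T$ into the definition of $\pi(-Z)$ (using that $C_T \equiv 0$ kills the cost term), obtaining
\[
  \pi(-Z) \;=\; \inf_{\delta \in \H^u} \rho\!\left(-p_0 + \bigl((\delta - \delta^*) \cdot S\bigr)_T\right).
\]
Next I would apply cash-invariance of $\rho$ to pull the constant $-p_0$ out, which turns it into an added $+p_0$ outside the risk measure. At this point the task reduces to showing
\[
  \inf_{\delta \in \H^u} \rho\!\left(\bigl((\delta - \delta^*) \cdot S\bigr)_T\right) \;=\; \pi(0).
\]

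The key observation to close the argument is that $\H^u$, being the set of all $\R^d$-valued $\F$-adapted processes, is closed under addition and contains $\delta^*$, so the map $\delta \mapsto \delta - \delta^*$ is a bijection of $\H^u$ onto itself. Reindexing the infimum via $\eta := \delta - \delta^*$ therefore gives exactly $\inf_{\eta \in \H^u} \rho((\eta \cdot S)_T) = \pi(0)$, since $C_T \equiv 0$. Combining this with the previous step yields $\pi(-Z) = \pi(0) + p_0$, and the definition $p(Z) = \pi(-Z) - \pi(0)$ finishes the proof.

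There is really no obstacle here; the only point deserving a sentence in the write-up is why the translation step works, namely that the hypothesis $\H = \H^u$ is essential because with non-trivial constraints $\delta - \delta^*$ need not lie in $\H$. This also explains why the lemma is stated only for the unconstrained case.
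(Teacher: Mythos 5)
Your proof is correct and follows essentially the same route as the paper: substitute the attainability decomposition, apply cash-invariance to extract $p_0$, and reindex the infimum using the translation invariance $\H^u - \delta^* = \H^u$ (the paper writes this as $\H - \delta^* = \H$). Your closing remark on why the unconstrained hypothesis is essential is a sensible addition but does not change the argument.
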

\begin{proof} For any $\delta \in \H$, the assumptions and
  cash-invariance of $\rho$ imply
  \[\rho\!\left(-Z + (\delta \cdot S)_T - C_T(\delta)\right) = p_0 + \rho(([\delta
    -\delta^*]\cdot S)_T).\] Taking the infimum over $\delta \in \H$
  on both sides and using $\H - \delta^* = \H$ one obtains
  \[ \pi(-Z)= p_0+ \inf_{\delta \in \H} \rho(([\delta -\delta^*]\cdot    S)_T) = p_0 + \pi(0).\]
\end{proof}

\begin{remark} The methodology developed in this article can also be
  applied to approximate optimal hedging strategies in a setting where
  the price $p_0$ is given exogenously: fix a loss function
  $\ell\colon \R \to [0,\infty)$. Suppose $p_0 > 0$ is given, for example being the result of
  trading derivatives in the market at competitive prices, without taking into
  account risk-management. The agent then wishes to minimize her loss
  at maturity, i.e. she defines an optimal hedging strategy as a
  minimizer to
  \begin{equation}\label{eq:objective2} 
  \inf_{\delta \in \H} \E
    \left[\,\ell(-Z + p_0+(\delta \cdot S)_T - C_T(\delta))\,
    \right].\end{equation}
  This problem, i.e. optimal hedging under a capital constraint, is closely related to taking for $\rho$ a shortfall risk measure, see e.g. \cite{Foellmer2000}. 
\end{remark}

\subsubsection*{Arbitrage}
We mentioned in the introduction that we do not require per se that the market is free of arbitrage.
	To recap, we call~$\delta^{[X]}\in\H$ an arbitrage opportunity given~$X$
	is an opportunity to make money without risk of a loss, i.e.~$0 \leq X + (\delta^{[X]} S)_T - C_T(\delta^{[X]}) =: (*)$
	while $\P[(*) > 0 ] > 0$.

In case such an opportunity exists, we obviously have~$\rho(X) < 0$. 
Depending on the cost function and our constraints~$\H$, we may be able
to invest an unlimited amount into this strategy. In this case, we get~$\pi(X) = -\infty$.
If this applies to~$X=0$, we call such a market \emph{irrelevant}. This is justified by the following observation:

\begin{corollary}
	Assume that~$\pi(0) > -\infty$. Then~$\pi(X) > -\infty$ for all~$X$.
\end{corollary}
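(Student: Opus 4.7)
The plan is to exploit the two properties of $\pi$ that were just verified in the preceding proposition, namely monotonicity (decreasing) and cash-invariance, together with the standing assumption that $\Omega$ is finite.

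Since $\Omega = \{\omega_1, \ldots, \omega_N\}$, every $X \in \Xc$ is bounded: setting $M := \max_{i} X(\omega_i) < \infty$, and viewing $M$ as a constant random variable, we have $X \leq M$ pointwise. The hard part — there isn't really one here — is simply to orient the inequalities correctly: because $\pi$ is \emph{decreasing}, a pointwise upper bound on $X$ yields a lower bound on $\pi(X)$.

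Applying monotonicity to $X \leq M$ gives $\pi(X) \geq \pi(M)$. By cash-invariance with the constant $c = M$,
\[
\pi(M) = \pi(0 + M) = \pi(0) - M.
\]
Combining the two displays, $\pi(X) \geq \pi(0) - M$. Under the hypothesis $\pi(0) > -\infty$ and the finiteness of $M$, the right-hand side is strictly greater than $-\infty$, which yields the claim. Note that no structural assumption on $C_T$ or $\H$ beyond what was used to define $\pi$ is required; the argument rests only on the cash-invariance and monotonicity inherited from $\rho$, plus finiteness of $\Omega$.
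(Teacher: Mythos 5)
Your proof is correct and is essentially identical to the paper's: both bound $X$ above by $\sup X$ (finite since $\Omega$ is finite), then apply monotonicity and cash-invariance of $\pi$ to get $\pi(X) \geq \pi(0) - \sup X > -\infty$. No further comment is needed.
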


\begin{proof}
	Since~$\Omega$ is finite we have~$\sup X<\infty$ and therefore, using monotonicity,~$\pi(X) \geq \pi(\sup X) \geq \pi(0) -\sup X > -\infty$.
\end{proof}

We note, however, that irrelevance is not necessarily a consequence of outright arbitrage; 
such \emph{statistical arbitrage} may also occur in markets without  arbitrage. Consider
to this end the convex risk measure~$\rho(X) := - \E[ X ]$, and assume that the market without interest rates
is driven by a standard Black \& Scholes model with positive drift~$\mu$ between two time points~$t_0$ and~$t_1$, i.e.
\[
	S_0 := 1 \ \ \ \mbox{and} \ \ \ S_1 := \exp\left\{ \mu t_1 + \sigma Z \sqrt{ t_1 } \right\}
\]
for~$Z$ normal and a volatility~$\sigma>0$. Assume the proportional cost of trading~$S$ in~$t_0$ is~$0.5 e^{\mu t_1}$.
In this case~$\rho( \delta_0 S_1 - C_0(\delta ) ) = - 0.5 \delta_0  e^{\mu t_1 }$ for any~$\delta_0\in\R$ which
implies~$\pi(0) = -\infty$. 
Hence, the market is irrelevant, too, 
even if it does not exhibit classic arbitrage. We also note that this is expected in practise: as an example, consider
a strategy which writes options on an underlying. In most market scenarios such a strategy will on average make money,
even if it is subject to potentially drastic short-term losses.

In closing we note that even if the market dynamics exhibit classic arbitrage, and even in the absence of cost
or liquidity constraints, we may not be able to exploit it. Let us assume that for every arbitrage opportunity~$\delta^{[0]}$
there is a
non-zero probability of not making money, i.e.~$\P[(\delta^{[0]} S)_T + C_T(\delta^{[0]}) = 0] > 0$. Under the extreme risk measure~$\rho(X) := -\inf X$
this market remains relevant with~$\pi(0) = 0$.

\subsection{Exponential Utility Indifference Pricing}
The following lemma shows that the present framework includes
exponential utility indifference pricing as studied for example in
\cite{Hodges1989}, \cite{Davis1993},\cite{Whalley1997} and
\cite{Kallsen2015}. Recall that for the exponential utility function
$U(x):=-\exp(-\lambda x), x \in \R$ with risk-aversion parameter
$\lambda >0$ the indifference price $q(Z) \in \R$ of $Z$ is defined
by
\[ \sup_{\delta \in \H} \E \left[U(q(Z)-Z+(\delta \cdot S)_T +
    C_T(\delta)) \right]= \sup_{\delta \in \H} \E \left[U((\delta
    \cdot S)_T + C_T(\delta)) \right].\] In other words, if the seller
charges a cash amount of $q(Z)$, sells $Z$ and trades in the market,
she obtains the same expected utility as by not not selling $Z$ at
all.

\begin{lemma}\label{lem:expIP} Define $q(Z)$ as above. Choose $\rho$
  as the \textit{entropic risk measure}
  \begin{equation}\label{eq:entropic} \rho(X)= \frac{1}{\lambda}\log
    \E[\exp(-\lambda X)], \end{equation}
  and define $p(Z)$ by \eqref{eq:priceDef}. Then $q(Z)=p(Z)$. 
\end{lemma}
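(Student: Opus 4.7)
The plan is to exploit the simple relationship between the exponential utility $U(x) = -\exp(-\lambda x)$ and the entropic risk measure $\rho$ defined in \eqref{eq:entropic}: for any $X \in \Xc$, one has $\rho(X) = \frac{1}{\lambda}\log(-\E[U(X)])$. Since $x \mapsto \frac{1}{\lambda}\log(-x)$ is strictly decreasing on $(-\infty,0)$, and $\E[U(\cdot)]<0$ always, minimising $\rho$ over any family of random variables is equivalent to maximising $\E[U(\cdot)]$ over the same family. Applying this to the family $\{-Z+(\delta\cdot S)_T - C_T(\delta) : \delta\in\H\}$, and separately to the family obtained by taking $Z=0$, I can rewrite both infima in the definition of $\pi$ as suprema of expected utilities.

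Next I would exploit the multiplicative ``cash-invariance'' of the exponential utility: $U(x+c) = \exp(-\lambda c)U(x)$, which gives $\E[U(c+Y)] = \exp(-\lambda c)\E[U(Y)]$ for any $c\in\R$. Substituting this into the defining equation of $q(Z)$ (with $c = q(Z)$ applied inside the first supremum) and using that $\exp(-\lambda q(Z))>0$ can be pulled through the supremum, I would obtain
\[
\exp(-\lambda q(Z))\,\sup_{\delta\in\H}\E[U(-Z+(\delta\cdot S)_T - C_T(\delta))] \;=\; \sup_{\delta\in\H}\E[U((\delta\cdot S)_T - C_T(\delta))].
\]
Solving for $q(Z)$ yields
\[
q(Z) \;=\; \frac{1}{\lambda}\log\frac{-\sup_{\delta\in\H}\E[U(-Z+(\delta\cdot S)_T-C_T(\delta))]}{-\sup_{\delta\in\H}\E[U((\delta\cdot S)_T-C_T(\delta))]},
\]
where I have multiplied numerator and denominator by $-1$ so that both quantities inside the $\log$ are strictly positive.

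Finally, I would compute $p(Z) = \pi(-Z)-\pi(0)$ using the identity from the first paragraph: since $\inf_\delta \rho(\,\cdot\,) = \frac{1}{\lambda}\log\bigl(-\sup_\delta \E[U(\,\cdot\,)]\bigr)$, the difference $\pi(-Z)-\pi(0)$ collapses to exactly the expression just derived for $q(Z)$, proving $p(Z)=q(Z)$.

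There is no real obstacle here; the only minor subtlety is the bookkeeping of signs, namely ensuring that the logarithm is applied to a strictly positive number. This is automatic because $\E[U(Y)]<0$ for every $Y\in\Xc$ (as $\Omega$ is finite and $\P[\{\omega_i\}]>0$ for every $i$), so both suprema appearing above lie in $(-\infty,0)$ and their ratio is a well-defined positive real whose logarithm is finite.
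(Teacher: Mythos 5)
Your proof is correct and follows essentially the same route as the paper: the paper's one-line proof simply states the identity $q(Z) = \frac{1}{\lambda}\log\bigl(\sup_{\delta}\E[U(-Z+(\delta\cdot S)_T - C_T(\delta))]/\sup_{\delta}\E[U((\delta\cdot S)_T - C_T(\delta))]\bigr)$ and invokes \eqref{eq:priceDef} and \eqref{eq:entropic}, while you supply the intermediate steps (multiplicative cash-invariance of $U$, monotonicity of $x\mapsto\frac{1}{\lambda}\log(-x)$, and positivity of the quantities inside the logarithm). The added bookkeeping is a welcome expansion of the paper's terse argument, not a different approach.
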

\begin{proof}
  Using the special form of $U$, one may write the indifference price
  as
  \[ q(Z) = \frac{1}{\lambda}\log\left(\frac{\sup_{\delta \in \H} \E
        \left[U(-Z+(\delta \cdot S)_T +
          C_T(\delta))\right]}{\sup_{\delta \in \H} \E \left[U((\delta
          \cdot S)_T + C_T(\delta))\right]}\right) \] and so the claim
  follows from \eqref{eq:priceDef} and \eqref{eq:entropic}.
\end{proof}

\subsection{Optimized certainty equivalents}
Assume that $\ell \colon \R \to \R$ is a \emph{loss function}, i.e.~continuous, non-decreasing and convex. 
We may define a convex risk measure
$\rho$ by setting
\begin{equation}\label{eq:OCE}\rho(X):= \inf_{w \in \R} \left\lbrace w
    + \E[\ell(-X-w)] \right\rbrace, \quad X \in
  \Xc. \end{equation}
\begin{lemma}
  \eqref{eq:OCE} defines a convex risk measure.
\end{lemma}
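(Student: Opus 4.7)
The plan is to verify the three defining axioms of a convex risk measure directly from the formula \eqref{eq:OCE}; each one follows from one of the three hypotheses on $\ell$ (monotonicity, cash-invariance, convexity).

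First I would handle monotonicity. Suppose $X_1 \geq X_2$. Then for every fixed $w \in \R$ and every $\omega \in \Omega$, $-X_1(\omega) - w \leq -X_2(\omega) - w$, so the fact that $\ell$ is non-decreasing gives $\ell(-X_1-w) \leq \ell(-X_2-w)$ pointwise. Taking expectations, adding $w$, and then taking the infimum over $w$ yields $\rho(X_1) \leq \rho(X_2)$. Cash-invariance is obtained by a change of variable: for $c \in \R$, substitute $w' := w + c$ inside the infimum to get
\[
\rho(X+c) = \inf_{w \in \R}\{w + \E[\ell(-X-c-w)]\} = \inf_{w' \in \R}\{w' - c + \E[\ell(-X-w')]\} = \rho(X) - c.
\]

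The main step is convexity, which is where convexity of $\ell$ enters. Fix $\alpha \in [0,1]$, set $\alpha' := 1-\alpha$, and let $w_1, w_2 \in \R$ be arbitrary. The key trick is to pick the combined shift $w := \alpha w_1 + \alpha' w_2$, because then
\[
-\alpha X_1 - \alpha' X_2 - w = \alpha(-X_1 - w_1) + \alpha'(-X_2 - w_2).
\]
Applying convexity of $\ell$ pointwise, taking expectations, and adding $w$ gives
\[
w + \E[\ell(-\alpha X_1 - \alpha' X_2 - w)] \leq \alpha\bigl(w_1 + \E[\ell(-X_1-w_1)]\bigr) + \alpha'\bigl(w_2 + \E[\ell(-X_2-w_2)]\bigr).
\]
The left-hand side is an upper bound for $\rho(\alpha X_1 + \alpha' X_2)$ since $w$ is a particular admissible choice in the infimum. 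Taking the infimum over $w_1$ and $w_2$ separately on the right-hand side then yields $\rho(\alpha X_1 + \alpha' X_2) \leq \alpha \rho(X_1) + \alpha' \rho(X_2)$.

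The only genuinely delicate issue is that a convex risk measure is defined to be $\R$-valued, so one should verify that $\rho(X) \in \R$ for every $X \in \Xc$. Finiteness from above is automatic since $\Omega$ is finite and $\ell$ is continuous, so any single $w$ yields a finite value. Finiteness from below is the real subtlety: since $\ell$ is convex and non-decreasing, the map $w \mapsto w + \E[\ell(-X-w)]$ may tend to $-\infty$ as $w \to +\infty$ unless $\ell$ grows slower than linearly at $-\infty$ or similar; this is the standard coercivity gap in the optimized-certainty-equivalent construction of Ben-Tal and Teboulle. I would either invoke such an implicit growth hypothesis on $\ell$, or simply note that the three axioms hold in the extended-real sense and the conclusion $\rho(X) \in \R$ follows whenever the infimum in \eqref{eq:OCE} is attained in the interior.
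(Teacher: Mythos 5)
Your proof follows the same route as the paper's: monotonicity from $\ell$ being non-decreasing, cash-invariance via the substitution $w \mapsto w+c$, and convexity by rewriting the single infimum as a double infimum over $w_1,w_2$ with $w=\alpha w_1+\alpha' w_2$ and applying convexity of $\ell$ pointwise before splitting the infima. Your closing remark on finiteness is a genuine point the paper silently skips: with only ``continuous, non-decreasing, convex'' one can have $\rho\equiv-\infty$ (e.g.\ $\ell\equiv 0$, or $\ell(x)=x/2$), and one needs $1$ to lie between the asymptotic slopes of $\ell$ (the Ben-Tal--Teboulle normalization), a condition satisfied by both of the paper's examples.
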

\begin{proof} Let $X, Y \in \Xc$ be assets.
  \begin{itemize}
  \item[(i)]Monotonicity: suppose $X \leq Y$. Since $\ell$ is
    non-decreasing, for any $w \in \R$ one has
    $\E[\ell(-X-w)] \geq \E[\ell(-Y-w)]$ and thus $\rho(X)\geq \rho(Y)$.
  \item[(ii)]Cash invariance: for any $m \in \R$, \eqref{eq:OCE}
    gives
    \[ \rho(X+m)= \inf_{w \in \R} \left\lbrace (w+m)-m +
        \E[\ell(-X-(w+m))] \right\rbrace = -m+\rho(X). \]
  \item[(iii)]Convexity: let $\lambda \in [0,1]$. Then convexity of
    $\ell$ implies
    \[\begin{aligned}\rho(\lambda X + &(1-\lambda)Y) \\ & = \inf_{w
          \in \R} \left\lbrace w + \E[\ell(-\lambda X -(1-\lambda)Y-w)]
        \right\rbrace \\& = \inf_{w_1,w_2 \in \R} \left\lbrace \lambda
          w_1 + (1-\lambda) w_2 + \E[\ell(\lambda
          (-X-w_1)+(1-\lambda)(-Y-w_2))] \right\rbrace \\& \leq
        \inf_{w_1 \in \R}\inf_{w_2 \in \R} \left\lbrace \lambda (w_1+
          \E[\ell(-X-w_1)]) + (1-\lambda)(w_2 + \E[\ell(-Y-w_2)])
        \right\rbrace \\ & =
        \lambda\rho(X)+(1-\lambda)\rho(Y). \end{aligned}\]
  \end{itemize}
  \vspace{-5mm}
\end{proof}

Taking $\ell(x):=-u(-x)$ ($x \in \R$) for a utility function
$u \colon \R \to \R$, \eqref{eq:OCE} coincides with the optimized
certainty equivalent as defined (and studied in a lot more detail than
here) in \cite{BenTal2007}.
\begin{example}\label{ex:entropic} Fix $\lambda > 0$ and set
  $\ell(x):=\exp(\lambda x)-\frac{1+\log(\lambda)}{\lambda}$, $x \in
  \R$. Then the optimization problem in \eqref{eq:OCE} can be solved
  explicitly and the minimizer $w^*$ satisfies
  $e^{\lambda w^*} = \lambda \E[\exp(-\lambda X)]$. Inserting this
  into \eqref{eq:OCE}, one obtains the \textit{entropic risk measure}
  defined in \eqref{eq:entropic} above.

\end{example}

\begin{example}\label{ex:CVAR} Let $\alpha \in (0,1)$ and set
  $\ell(x):=\frac{1}{1-\alpha}\max(x,0)$. The associated risk measure
  \eqref{eq:OCE} is called \textit{average value at risk at level}
  $1-\alpha$ (see \cite[Definition~4.48,
  Proposition~4.51]{Foellmer2016} with $\lambda:=1-\alpha$) or also
  \textit{conditional value at risk} or \textit{expected shortfall}.
\end{example}

\begin{proposition}\label{lem:martingale} Suppose $S$ is a
  $\P$-martingale, $\rho$ is defined as in \eqref{eq:OCE} and $\pi$,
  $p$ as in \eqref{eq:objective1}, \eqref{eq:priceDef}. Then
  \begin{itemize}
  \item[(i)] $\pi(0)=\rho(0)$,
  \item[(ii)] $p(Z) \geq \E[Z]$ for any $Z \in \Xc$.
  \end{itemize}
\end{proposition}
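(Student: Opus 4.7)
\medskip

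\noindent\textbf{Plan of proof.} The two statements share a common mechanism: under the martingale hypothesis, hedging cannot improve the OCE because costs are non-negative and stochastic integrals against a martingale are centered. I will exploit this through Jensen's inequality applied to the convex loss function $\ell$ inside the OCE representation \eqref{eq:OCE}.

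\medskip

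\noindent\textbf{Step 1: Two standing observations.} First, for any $\delta\in\H$ the process $(\delta\cdot S)$ is a $\P$-martingale starting at $0$, since $\delta_k$ is $\Fc_k$-measurable and $S$ is a $\P$-martingale; hence
\[
\E[(\delta\cdot S)_T] \;=\; 0.
\]
Second, because $c_k\geq 0$ and $c_k(0)=0$, one has $C_T(\delta)\geq 0$ pointwise, so $\E[C_T(\delta)]\geq 0$. These two facts will let me estimate the argument of $\ell$ from above in expectation.

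\medskip

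\noindent\textbf{Step 2: Proof of (i).} The inequality $\pi(0)\leq\rho(0)$ is immediate by taking the admissible strategy $\delta\equiv 0$ (note $0\in\H$ since $H_k(0)=0$) and using $C_T(0)=0$. For the reverse inequality, fix $\delta\in\H$ and $w\in\R$ and apply Jensen's inequality together with the monotonicity of $\ell$:
\[
\E\bigl[\ell\bigl(-(\delta\cdot S)_T + C_T(\delta) - w\bigr)\bigr]
\;\geq\; \ell\bigl(\E[C_T(\delta)] - w\bigr)
\;\geq\; \ell(-w),
\]
where the last step uses $\E[C_T(\delta)]\geq 0$ and $\ell$ non-decreasing. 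Taking infimum over $w$ yields $\rho((\delta\cdot S)_T - C_T(\delta))\geq \rho(0)$, and then infimum over $\delta\in\H$ gives $\pi(0)\geq \rho(0)$.

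\medskip

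\noindent\textbf{Step 3: Proof of (ii).} The same idea works with $-Z$ in place of $0$. For fixed $\delta\in\H$ and $w\in\R$,
\[
\E\bigl[\ell\bigl(Z - (\delta\cdot S)_T + C_T(\delta) - w\bigr)\bigr]
\;\geq\; \ell\bigl(\E[Z] + \E[C_T(\delta)] - w\bigr)
\;\geq\; \ell\bigl(\E[Z]-w\bigr).
\]
Substituting $w' := w - \E[Z]$ inside the OCE infimum,
\[
\rho\bigl(-Z + (\delta\cdot S)_T - C_T(\delta)\bigr)
\;\geq\; \inf_{w\in\R}\bigl\{w + \ell(\E[Z]-w)\bigr\}
\;=\; \E[Z] + \rho(0).
\]
Taking infimum over $\delta\in\H$ gives $\pi(-Z)\geq \E[Z]+\rho(0)$, and combining with part (i) yields $p(Z) = \pi(-Z)-\pi(0) \geq \E[Z]$.

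\medskip

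\noindent\textbf{Main obstacle.} There is no serious technical obstacle; the argument is a clean application of Jensen plus monotonicity of $\ell$. The only subtle point is recognizing that both the martingale centering of $(\delta\cdot S)_T$ \emph{and} the non-negativity of $C_T(\delta)$ pull the estimate in the right direction, so that the infimum over $w$ decouples cleanly into $\rho(0)$ (respectively $\E[Z]+\rho(0)$). I would also briefly verify that the OCE infimum need not be attained for the argument to go through: all manipulations are valid inside the outer $\inf_w$, so attainment is not required.
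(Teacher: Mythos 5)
Your proof is correct and follows essentially the same route as the paper: Jensen's inequality applied to the convex loss $\ell$, the martingale centering $\E[(\delta\cdot S)_T]=0$, non-negativity of $C_T(\delta)$, and monotonicity of $\ell$, with the infimum over $w$ shifting to give $\rho(-\E[Z])=\E[Z]+\rho(0)$. The only cosmetic difference is that the paper derives the general estimate $\pi(-Z)\geq \E[Z]+\rho(0)$ first and obtains (i) by setting $Z=0$, whereas you prove (i) separately and then (ii); the substance is identical.
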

\begin{proof} Since $0 \in \H$ and $C_T(0)=0$, one has
  $\pi(0) \leq \rho(0)$ for any choice of risk measure $\rho$ in
  \eqref{eq:objective1}. Under the present assumptions the converse
  inequality is also true: Since $S$ is a martingale, it holds that
\begin{equation}\label{eq:auxEq7} \E[(\delta \cdot S)_T] = \sum_{j=0}^{n-1} \E\left[\delta_j \E[S_{j+1}-S_j | \Fc_j]\right] = 0 \quad \text{ for any } \delta \in \H. \end{equation} 
  By first applying Jensen's inequality
  (recall that $\ell$ is convex) and then using \eqref{eq:auxEq7},
  that $C_T(\delta) \geq 0$ for any $\delta \in \H$ and that $\ell$ is
  non-decreasing, one obtains
  \begin{equation}\label{eq:OCEPriceEstimate}\begin{aligned} \pi(-Z)& =\inf_{w \in \R} \inf_{\delta \in \H} \left\lbrace w + \E[\ell(Z-(\delta \cdot S)_T +C_T(\delta)-w)] \right\rbrace \\
      & \geq \inf_{w \in \R} \inf_{\delta \in \H} \left\lbrace w + \ell(\E[Z-(\delta \cdot S)_T +C_T(\delta)-w]) \right\rbrace \\
      & \geq \inf_{w \in \R} \left\lbrace w + \ell(\E[Z]-w) \right\rbrace
      = \rho(-\E[Z]) = \E[Z]+\rho(0).
    \end{aligned}
  \end{equation}
  Inserting $Z=0$ yields the converse inequality $\pi(0) \geq \rho(0)$
  and thus (i). Combining (i), \eqref{eq:priceDef} and
  \eqref{eq:OCEPriceEstimate} then directly gives (ii).
\end{proof}

\section{Approximating hedging strategies by deep neural
  networks}\label{sec:NeuralNets}

The key idea that we pursue in this article is to approximate hedging
strategies by neural networks.  Before describing this approach in
more detail we recall the definition and approximation properties of
neural networks and prove some basic results on hedging strategies
built from them. While these results show that the approach is
theoretically well-founded, they are only one reason
 why we have used neural networks (and not some other
parametric family of functions) to approximate hedging strategies. The
other reason is that optimal hedging strategies built from
neural networks can numerically be  calculated very efficiently. This is explained
first for the case of OCE risk measures and for entropic risk. Finally, an extension to
general risk measures is presented.

\subsection{Universal approximation by neural networks}
Let us first recall the definition of a (feed forward) neural network:

\begin{definition}\label{def:nn} Let $L, N_0,N_1,\ldots,N_L \in \N$,
  $\activF \colon \R \to \R$ and for any $\ell=1,\ldots,L$, let
  $W_\ell \colon \R^{N_{\ell-1}} \to \R^{N_\ell}$ an affine function. A
  function $F \colon \R^{N_0} \to \R^{N_L}$ defined as
  \[ F(x)=W_L \circ F_{L-1} \circ \cdots \circ F_1 \text{ with } F_\ell =
    \activF \circ W_\ell \, \text{ for } \ell=1,\ldots,L-1 \] is called a
  (feed forward) neural network. Here the \textit{activation function}
  $\activF$ is applied componentwise. $L$ denotes the number of
  layers, $N_1,\ldots,N_{L-1}$ denote the dimensions of the hidden
  layers and $N_0$, $N_L$ of the input and output layers,
  respectively. For any $\ell=1,\ldots,L$ the affine function $W_\ell$ is
  given as $ W_\ell(x) = A^\ell x + b^\ell$ for some
  $A^\ell \in \R^{N_\ell \times N_{\ell-1}}$ and $b^\ell \in \R^{N_\ell}$. For any
  $i=1,\ldots N_\ell, j=1,\ldots,N_{\ell-1}$ the number $A^\ell_{i j}$ is
  interpreted as the weight of the edge connecting the node $i$ of
  layer $\ell-1$ to node $j$ of layer $\ell$. The number of non-zero weights
  of a network is the sum of the number of non-zero entries of the
  matrices $A^\ell$, $\ell=1,\ldots,L$ and vectors $b^\ell$, $\ell=1,\ldots,L$.
\end{definition}
Denote by $\NN^\activF_{\infty,d_0,d_1}$ the set of neural networks
mapping from $\R^{d_0} \to \R^{d_1}$ and with activation function
$\activF$.  The next result (\cite[Theorems~1 and 2]{Hornik1991})
illustrates that neural networks approximate multivariate functions
arbitrarily well.
\begin{theorem}[Universal approximation, \cite{Hornik1991}]
  \label{thm:universalApprox} Suppose $\activF$ is bounded and
  non-constant. The following statements hold:
  \begin{itemize}
  \item For any finite measure $\mu$ on
    $(\R^{d_0},\mathcal{B}(\R^{d_0}))$ and $1 \leq p < \infty$, the
    set $\NN^\activF_{\infty,d_0,1}$ is dense in $L^p(\R^{d_0},\mu)$.
  \item If in addition $\activF \in C(\R)$, then
    $\NN^\activF_{\infty,d_0,1}$ is dense in $C(\R^{d_0})$ for the
    topology of uniform convergence on compact sets.
  \end{itemize}
\end{theorem}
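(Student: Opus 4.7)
The plan is to prove both parts by a Hahn--Banach / duality argument, following the template of Cybenko adapted to a general bounded non-constant activation. For the $L^p$ statement, if $\NN^\activF_{\infty,d_0,1}$ were not dense in $L^p(\R^{d_0},\mu)$, then Hahn--Banach would supply a non-zero bounded linear functional on $L^p(\R^{d_0},\mu)$ vanishing on every element of $\NN^\activF_{\infty,d_0,1}$. By the Riesz representation of $(L^p)^*$ (using $1 \le p < \infty$ and $\mu$ finite) this functional is integration against some $g \in L^q(\mu)$ with $g \not\equiv 0$; applied to the one-neuron networks $x \mapsto \activF(a \cdot x + b)$ it yields
\[
 \int_{\R^{d_0}} \activF(a \cdot x + b)\, g(x)\, d\mu(x) = 0 \qquad \text{for every } a \in \R^{d_0},\ b \in \R,
\]
so the whole problem reduces to showing that this forces the finite signed measure $\nu := g\,d\mu$ to be zero.

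The heart of the matter is therefore the claim that a bounded non-constant $\activF$ is \emph{discriminatory} in the sense that the display above implies $\nu = 0$. The key move is to fix $a$, push $\nu$ forward to a finite signed measure $\nu_a$ on $\R$ via $x \mapsto a \cdot x$, and recognise
\[
 F_a(b) := \int \activF(a \cdot x + b)\, d\nu(x)
\]
as the convolution in $b$ of $\activF$ with the reflection of $\nu_a$. Since $\activF$ is bounded and $\nu_a$ finite, $F_a$ is a tempered distribution in $b$; taking Fourier transforms, $F_a \equiv 0$ becomes $\widehat{\activF}(\xi)\,\widehat{\nu}(-\xi a) = 0$. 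Boundedness and non-constancy of $\activF$ imply that $\widehat{\activF}$ is not supported at $\{0\}$ (otherwise $\activF$ would be a polynomial, hence constant by boundedness), so there exists $\xi_0 \ne 0$ in its support; letting $a$ range over $\R^{d_0}$ then forces $\widehat{\nu} \equiv 0$ and hence $\nu = 0$, contradicting $g \not\equiv 0$.

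For the second statement the set-up is identical except the ambient space is $C(\R^{d_0})$ with the topology of uniform convergence on compacta. Restricting to an arbitrary compact $K \subset \R^{d_0}$, a continuous linear functional on $C(K)$ is represented by a finite signed Borel measure via Riesz--Markov; continuity of $\activF$ ensures the ridge functions lie in $C(K)$ and the integral identity still makes sense. The discriminatory lemma above (applied to the signed measure extended by zero outside $K$) then concludes.

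The main obstacle is the discriminatory lemma in the second paragraph: for a general bounded non-constant $\activF$ (not necessarily sigmoidal, monotone, or continuous) the classical pointwise limit $\activF(\lambda t) \to \mathbf{1}_{\{t>0\}}$ used by Cybenko is unavailable, and one genuinely needs the Fourier/distributional characterisation of bounded non-polynomial functions to exclude $\mathrm{supp}\,\widehat{\activF} \subset \{0\}$. Everything else is routine bookkeeping with dual spaces and, in the $C(\R^{d_0})$ case, a standard restriction--extension argument on compacta.
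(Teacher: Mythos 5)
First, a remark on the comparison you were asked to make: the paper does not prove Theorem~\ref{thm:universalApprox} at all --- it is imported verbatim from Theorems~1 and~2 of \cite{Hornik1991} and used as a black box. So your proposal is competing with Hornik's published proof rather than with anything in the paper. Hornik's route is genuinely different from yours: he reduces the multivariate statement to a one-dimensional one via ridge functions and trigonometric families and establishes the one-dimensional density directly, using boundedness of $\activF$ to control the tails of the general finite measure $\mu$. Your route is the Cybenko / Leshno--Lin--Pinkus--Schocken line of attack, and its outer layers are sound: Hahn--Banach plus the Riesz representation of $(L^p(\mu))^*$ for finite $\mu$ and $1\le p<\infty$, Riesz--Markov on compacta for the second bullet, the reduction to the pushforward $\nu_a$, and the observation that $\mathrm{supp}\,\widehat{\activF}\subset\{0\}$ would force $\activF$ to be a bounded polynomial, hence constant, are all correct.

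The gap is exactly where you yourself locate the heart of the matter, and the two sentences you devote to it do not close it. The identity ``taking Fourier transforms, $F_a\equiv 0$ becomes $\widehat{\activF}(\xi)\,\widehat{\nu}(-\xi a)=0$'' multiplies the tempered distribution $\widehat{\activF}$ (recall $\activF$ is only bounded and measurable, so $\widehat{\activF}$ is in general a genuine distribution) by the Fourier--Stieltjes transform of a finite signed measure, which is merely a bounded continuous function; that product is undefined, and the convolution theorem for tempered distributions does not cover ``bounded function convolved with a finite, non-compactly-supported measure'' without further work. Mollifying $\activF$ first does not remove the problem, since $\widehat{\activF\ast\phi}$ is still only a distribution. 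The natural repair --- deducing that $\widehat{\activF}$ vanishes on the open set where $\xi\mapsto\widehat{\nu}(\xi a)$ is nonzero --- amounts to writing an arbitrary test function supported in that set as $\widehat{h}\cdot\widehat{\nu_a}$ with $h\in L^1$, i.e.\ a local Wiener-type division in the Fourier--Stieltjes algebra; the quotient $\psi/\widehat{\nu_a}$ is only continuous, so it is not automatically the Fourier transform of an $L^1$ function, and this step needs a real argument (it is precisely the kind of statement that can fail for measure algebras). Since the discriminatory lemma is the entire content of the theorem, the proposal as written is an outline with its central step asserted rather than proved; everything surrounding it is indeed routine. I would either carry out the mollification-plus-Tauberian argument in full or follow Hornik's original reduction, which avoids the distributional Fourier transform altogether.
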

Since each component of an $\R^{d_1}$-valued neural network is an
$\R$-valued neural network, this result easily generalizes to
$\NN^\activF_{\infty,d_0,d_1}$ with $d_1>1$, see also \cite{Hornik1991}. A
variety of other results with different assumptions on $\activF$ or
emphasis on approximation rates are available, see
e.g. \cite{Grohs2017} for further references.

In what follows, we fix an activation function $\activF$ and omit it in the notation, i.e. we write $\NN_{\infty,d_0,d_1}:=\NN^\activF_{\infty,d_0,d_1}$. Furthermore, we denote
by $\{\NN_{M,d_0,d_1}\}_{M \in \N}$ a sequence of subsets of
$\NN_{\infty,d_0,d_1}$ with the following properties:
\begin{itemize}
\item $\NN_{M,d_0,d_1} \subset \NN_{M+1,d_0,d_1}$ for all $M \in \N$,
\item $\bigcup_{M \in \N} \NN_{M,d_0,d_1} = \NN_{\infty,d_0,d_1}$,
\item for any $M \in \N$, one has
  $\NN_{M,d_0,d_1} = \{F^\theta \colon \theta \in \Theta_{M,d_0,d_1}
  \}$ with $\Theta_{M,d_0,d_1} \subset \R^{q}$ for some $q \in \N$
  (depending on $M$).
\end{itemize}

\begin{remark}\label{rmk:NNex}
  We have two classes of examples in mind: the first one is to take
  for $\NN_{M,d_0,d_1}$ the set of all neural networks in
  $\NN_{\infty,d_0,d_1}$ with an arbitrary number of layers and
  nodes, but at most $M$ non-zero weights. The second one is to take
  for $\NN_{M,d_0,d_1}$ the set of all neural networks in
  $\NN_{\infty,d_0,d_1}$ with a \textit{fixed architecture},
  i.e. a fixed number of layers $L^{(M)}$ and fixed input and output
  dimensions for each layer. These are specified by $d_0$, $d_1$ and
  some non-decreasing sequences $\{L^{(M)}\}_{M \in \N}$ and
  $\{N_1^{(M)}\}_{M \in \N}$, $\ldots$,
  $\{N_{L^{(M)}-1}^{(M)}\}_{M \in \N}$.  In both cases the set
  $\NN_{M,d_0,d_1}$ is parametrized by matrices $A^\ell$ and vectors~$b^\ell$.
\end{remark}

\subsection{Optimal hedging using deep neural networks}\label{subsec:nnHedgingStrateiges}

Motivated by the universal approximation results stated above, we now
consider neural network hedging strategies. Let our activation function therefore be bounded and non-constant.

 In order to apply our theorem \ref{thm:universalApprox}, we represent the optimization over
constrained trading strategies~$\delta\in\H$ as an optimization over~$\delta \in \H^u$ with a
following modified objective.
\begin{lemma}
We may write the constrained problem~\ref{eq:objective1} as the modified unconstrained problem as
\begin{equation}\label{eq:objective1_1} \tag{3.1'}
	\pi(X) = \inf_{\delta \in \H^u}
  \rho\!\left(X + ( H\circ \delta \cdot S)_T - C_T(H \circ \delta)\right)  \ .
\end{equation}
\end{lemma}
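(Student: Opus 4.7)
The plan is to observe that the lemma is essentially a restatement of how the constrained set $\H$ was defined, namely as the image $H \circ \H^u$ of the unconstrained strategies under the componentwise composition map $\Phi \colon \delta^u \mapsto H \circ \delta^u$. So the proof reduces to showing that $\Phi$ is a surjection onto $\H$ and then performing a change of variables in the infimum.

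Concretely, I would argue in two steps. First, for any $\delta \in \H$, the very definition $\H := (H \circ \H^u)$ gives some $\delta^u \in \H^u$ with $\delta = H \circ \delta^u$. Substituting $\delta = H \circ \delta^u$ into the objective of \eqref{eq:objective1} shows
\[
\rho\bigl(X + (\delta \cdot S)_T - C_T(\delta)\bigr) = \rho\bigl(X + ((H \circ \delta^u) \cdot S)_T - C_T(H \circ \delta^u)\bigr),
\]
so the infimum in \eqref{eq:objective1} is bounded below by the infimum in \eqref{eq:objective1_1} taken over $\delta^u \in \H^u$. Second, for any $\delta^u \in \H^u$, the strategy $H \circ \delta^u$ is adapted (since each $H_k$ is $\Fc_k$-measurable and $\delta^u$ is $\F$-adapted, so that $(H \circ \delta^u)_k$ is $\Fc_k$-measurable by recursion on $k$) and takes values in $\H_k = H_k(\R^{d(k+1)})$ by construction; hence $H \circ \delta^u \in \H$. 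This gives the reverse inequality and therefore equality of the two infima.

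The only step that requires any care is the measurability/adaptedness verification in the second direction, which is a straightforward induction: assuming $(H \circ \delta^u)_0, \ldots, (H \circ \delta^u)_{k-1}$ are $\F$-adapted, the definition
\[
(H \circ \delta^u)_k = H_k\bigl((H \circ \delta^u)_0, \ldots, (H \circ \delta^u)_{k-1}, \delta^u_k\bigr)
\]
together with the $\Fc_k$-measurability of $H_k$ and of $\delta^u_k$ yields $\Fc_k$-measurability of $(H \circ \delta^u)_k$. I do not expect any real obstacle here; the lemma is a bookkeeping statement that simply rewrites the constrained optimization as an unconstrained one over the pre-images, at the price of absorbing the projection $H$ into the objective.
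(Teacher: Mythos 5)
Your proof is correct and follows essentially the same change-of-variables argument as the paper, which simply notes that $H\circ\delta^u\in\H$ for all $\delta^u\in\H^u$ and that $H\circ\delta=\delta$ for $\delta\in\H$. Your variant replaces the latter identity with surjectivity of $\delta^u\mapsto H\circ\delta^u$ onto $\H$ (immediate from the definition $\H:=H\circ\H^u$), which is a harmless — arguably slightly more robust — rephrasing, and the adaptedness induction you add is a correct if routine supplement.
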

\begin{proof}
	Note that $H\!\circ\!\delta = \delta$ for all~$\delta \in \H$, and
	$H\!\circ\!\delta^u\in\H$ for all~$\delta^u\in\H^u$.
\end{proof}
   
     Recall that the information available in our market at $t_k$ is described 
by the observed maximal feature set~$I_0,\ldots,I_k$.
Our trading strategies should therefore depend on this information and on our previous
position in our tradable assets. This gives rise to the following semi-recurrent deep neural network structure
for our unconstrained trading strategies:
\begin{align} \label{eq:hedgingNNFull} 
\H_{M} & = \lbrace
  ( \delta_{k})_{k=0,\ldots,n-1} \in \H^u \, : \, \delta_{k} =
                                                F_k(I_{0},\ldots,I_{k},\delta_{k-1}) \, , F_k \in \NN_{M,r(k+1)+d,d} 
                                                \rbrace \\
  \nonumber & = \lbrace (\delta_{k}^\theta)_{k=0,\ldots,n-1} \in \H^u
              \, : \, \delta_{k}^\theta =
              F^{\theta_k}(I_{0},\ldots,I_{k}, \delta_{k-1}) \, , \theta_k \in
              \Theta_{M,r(k+1)+d,d} \rbrace
\end{align} 
We now replace the set $\H^u$ in \eqref{eq:objective1_1} by
$\H_{M} \subset \H^u$. 
We aim at calculating
\begin{eqnarray}\label{eq:objective1NN}
\pi^M(X) 
  & := & 
  	\inf_{\delta \in \H_M } \rho(X + (H\circ\delta \cdot S)_T - C_T(H\circ\delta)) 
  	\\
  & \nonumber = & 
  \inf_{\theta \in \Theta_M } \rho(X+ ( H\circ \delta^\theta \cdot S)_T - C_T(H\circ \delta^\theta)),
\end{eqnarray}
where $\Theta_M = \prod_{k=0}^{n-1} \Theta_{M,r(k+1)+d,d}$.  Thus, the
infinite-dimensional problem of finding an optimal hedging strategy is
reduced to the finite-dimensional constraint problem of finding optimal
parameters for our neural network.

\begin{remark}
Our setup becomes truly ``recurrent" if we enforce~$\theta^k = \theta^0$ for all~$k$ and add~``$k$"
as a parameter into the network. Below proof applies with few modifications.
\end{remark}

\begin{remark}\label{rmk:recurrent}
   If $S$ is an $(\F,\P)$-Markov process
  and $Z=g(S_T)$ for $g\colon \R^d \to \R$ and with simplistic market frictions we may know
  that the optimal strategy in \eqref{eq:objective1} is of the simpler form
  $\delta_{k} = f_k(I_{k},\delta_{k-1})$ for some
  $f_k \colon \R^{r+d} \to \R^d$.
\end{remark}

\begin{remark}
 We would similarly transform
  \eqref{eq:objective2} into a modified unconstrained problem,
  optimized over~$\H_M$.
\end{remark}

\begin{remark}
For practical implementations, handling trading constraints with~\ref{eq:objective1NN} is not particularly
efficient since the gradient of~$\Theta_M$ of our objective outside~$\H$ vanishes.
In the case where~$H \circ \delta = \delta$ for~$\delta \in \H$, this can be addressed by variants of
\[
	\pi(X) \equiv \inf_{\delta \in \H^u}
  \rho\!\left(X + ( H\circ \delta \cdot S)_T - C_T(\delta) - \gamma \| \delta - H \circ \delta \|_1 \right)   \ .
\]
for Lagrange multipliers~$\gamma \gg 0$.
\end{remark}

The next proposition shows that thanks to
the universal approximation theorem, strategies in $\H$ are
approximated arbitrarily well by strategies in $\H_M$. Consequently,
the neural network price $\pi^M(-Z)-\pi^M(0)$ converges to the exact
price $p(Z)$.

\begin{proposition}\label{lem:approx} Define $\H_M$ as in
  \eqref{eq:hedgingNNFull} and $\pi^M$ as in
  \eqref{eq:objective1NN}. Then for any $X \in \Xc$,
  \[\lim_{M \to \infty} \pi^M(X) = \pi(X) \ .\]
\end{proposition}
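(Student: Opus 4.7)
The inequality $\pi^M(X) \geq \pi(X)$ is immediate from $\H_M \subset \H^u$: the infimum in~\eqref{eq:objective1NN} is over a smaller set than in~\eqref{eq:objective1_1}. The content is therefore the reverse bound $\limsup_M \pi^M(X) \leq \pi(X)$.

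Fix $\epsilon > 0$ and pick $\delta^\epsilon \in \H^u$ achieving $\pi(X)+\epsilon$ in~\eqref{eq:objective1_1}. Since each $\delta^\epsilon_k$ is $\Fc_k = \sigma(I_0,\dots,I_k)$-measurable and $\Omega$ is finite, I may write $\delta^\epsilon_k = \phi_k(I_0,\dots,I_k)$ and smoothly interpolate $\phi_k$ from its values on the finite image $(I_0,\dots,I_k)(\Omega)$ to a continuous function on all of $\R^{r(k+1)}$, then extend trivially to a continuous $\tilde\phi_k \colon \R^{r(k+1)+d} \to \R^d$ that ignores its last argument. Next I construct $\delta^\theta \in \H_M$ inductively. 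Fix once and for all compact sets $K_k \supset (I_0,\dots,I_k)(\Omega)$ and closed balls $L_k \subset \R^d$ containing $\phi_{k-1}(\Omega)$ with a safety margin. By Theorem~\ref{thm:universalApprox}, for every $\eta>0$ there is $F^{\theta_k}$ with $|F^{\theta_k}(y,z)-\phi_k(y)| < \eta$ uniformly on $K_k \times L_k$; set $\delta^\theta_k := F^{\theta_k}(I_0,\dots,I_k,\delta^\theta_{k-1})$. Because $\tilde\phi_k$ does \emph{not} depend on its last argument, a direct induction on $k$ gives $\sup_\omega \|\delta^\theta_k(\omega) - \delta^\epsilon_k(\omega)\| < \eta$ for all $k$, provided the margins are chosen large enough to ensure $\delta^\theta_{k-1}(\Omega) \subset L_k$ at each step (which holds for $\eta$ small). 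All $F^{\theta_k}$ may be taken in a common $\NN_{M,\cdot,\cdot}$ by increasing $M$.

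To finish I push the pointwise approximation through the objective. Continuity of $H$ and upper semi-continuity of the $c_k$ yield, pointwise on the finite $\Omega$ and for any prescribed $\gamma>0$,
\[
X + (H \circ \delta^\theta \cdot S)_T - C_T(H \circ \delta^\theta) \geq X + (H \circ \delta^\epsilon \cdot S)_T - C_T(H \circ \delta^\epsilon) - \gamma
\]
once $\eta$ is small enough. Monotonicity and cash-invariance of $\rho$ then give $\rho(Y^\theta) \leq \rho(Y^\epsilon) + \gamma \leq \pi(X)+\epsilon+\gamma$, and since $\delta^\theta \in \H_M$ for all $M$ large, $\limsup_M \pi^M(X) \leq \pi(X)+\epsilon+\gamma$. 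Letting $\gamma,\epsilon \downarrow 0$ closes the argument.

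The main obstacle is the recurrent coupling built into $\H_M$: the natural compact set on which the $k$-th network should approximate $\tilde\phi_k$ depends on the realisations of $\delta^\theta_{k-1}$, which are themselves NN outputs. The trick is to fix the $L_k$ a priori and to exploit that $\tilde\phi_k$ ignores its last slot, so approximation errors do not compound across time steps. A secondary technical point is the loss of continuity caused by $c_k$ being only upper semi-continuous; this is absorbed by the $\gamma$-shift together with monotonicity and cash-invariance of $\rho$.
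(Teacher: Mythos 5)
Your proof is correct and follows essentially the same route as the paper's: reduce to the one-sided bound $\pi^M(X)\leq\pi(X)+\varepsilon$, approximate an $\varepsilon$-optimal strategy by networks via Theorem~\ref{thm:universalApprox}, use finiteness of $\Omega$ to upgrade to pointwise control, and pass through the objective via continuity of $H$, upper semi-continuity of the $c_k$, and monotonicity of $\rho$. The only variations are technical: the paper discards the recurrent argument $\delta_{k-1}$ at the outset (it is redundant, being itself a function of $I_0,\dots,I_{k-1}$) and applies the $L^1$-density form of the approximation theorem followed by a subsequence extraction, whereas you retain the recurrence and control it by an induction exploiting that the target ignores its last slot, use the uniform-on-compacts form after a continuous extension (which additionally requires $\activF$ continuous), and replace the paper's appeal to continuity of $\rho$ by cash-invariance.
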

\begin{proof}
	We first note that the argument~$\delta_{k-1}$ in~\ref{eq:objective1NN} 
	is redundant, since iteratively $\delta_{k-1}$ is itself a function of~$I_0,\ldots,I_{k-1}$.
	We may therefore write for the purpose of this proof
\begin{equation} \label{eq:hedgingNNFull2}   \tag{4.1'}
\H_{M} = \lbrace (\delta_{k}^\theta)_{k=0,\ldots,n-1} \in \H^u
              \, : \, \delta_{k}^\theta =
              F_k(I_{0},\ldots,I_{k}) \, , F_k \in \NN_{M,r(k+1),d} 
              \rbrace \ .
\end{equation}

    Since $\H_M \subset \H_{M+1} \subset \H^u$ for all
  $M \in \N$ it follows that $\pi^{M}(X) \geq \pi^{M+1}(X) \geq
  \pi(X)$. Thus it suffices to show that for any $\varepsilon > 0$
  there exists $M \in \N$ such that
  $\pi^M(X) \leq \pi(X)+\varepsilon$. 
  
  By definition, there exists
  $\delta  \in \H^u$ such that
  \begin{equation}\label{eq:auxEq1} \rho(X + (H\circ \delta \cdot S)_T -
    C_T(H\circ\delta)) \leq \pi(X)+\frac{\varepsilon}{2}. 
    \end{equation}

   Since~$\delta_k$ is~$\Fc_k$-measurable, there exists $f_k\colon \R^{r(k+1)} \to \R^d$ measurable such that
   $\delta_{k} = f_k(I_{0},\ldots,I_{k})$ for each $k=0,\ldots,n-1$. 
   Since $\Omega$ is finite, $\delta_{k}$ is bounded and so $f^i_k \in L^1(\R^{r(k+1)},\mu)$ for any $i=1,\ldots d$, where $\mu$ is the law of $(I_{0},\ldots,I_{k})$ under $\P$. Thus one may use 
  theorem~\ref{thm:universalApprox} to find ${F}^i_{k,n} \in \NN_{\infty,r(k+1),1}$ such that ${F}^i_{k,n}(I_{0},\ldots,I_{k})$ converges to $f^i_k(I_{0},\ldots,I_{k})$ in $L^1(\P)$ as $n \to \infty$.

      By passing now to a suitable subsequence, convergence holds $\P$-a.s.~simultaneously 
  for all $i,k$. Writing $\delta_k^n:=F_{k,n}(I_0,\ldots,I_k)$ and using $\P[\{\omega\}]>0$ for all $\omega \in \Omega$, this implies
  \begin{equation}\label{eq:auxEq10} 
  \lim_{n \to \infty} \delta_k^n(\omega)  =
  \delta_k(\omega) \quad \text{ for all } \omega \in \Omega. 
  \end{equation}
  Continuity of~$H_k(\cdot)(\omega)$ 
  for a fixed~$\omega$ implies moreover that also~$\lim_{n \to \infty} H_k(\omega) \circ \delta_k^n(\omega)  =
  H_k(\omega) \circ  \delta_k(\omega)$.
  
  Since $\Omega$ is finite, $\rho$ can be viewed as a convex function $\rho \colon \R^N \to \R$. In particular, $\rho$ is continuous. Using
  continuity of $\rho$ in the first step and upper semi-continuity of $c_k(\cdot)(\omega)$ for each $\omega \in \Omega$ combined with monotonicity of $\rho$ in the second step, one obtains
  \begin{eqnarray*}
  && \liminf_{n \to \infty} \rho(X + (H\circ \delta^n \cdot S)_T - C_T(H\circ \delta^n))
  \\
  && \ \ \leq \
  	 \rho(X + (H\circ\delta \cdot S)_T - \limsup_{n \to \infty} C_T(H\circ\delta^n)) 
  \\
  && \ \ \leq \
   \rho(X + (H\circ\delta \cdot S)_T - C_T(H\circ\delta)).
    \end{eqnarray*} 
    Combining this with \eqref{eq:auxEq1}, there
  exists $n \in \N$ (large enough) such that
  \begin{equation}\label{eq:auxEq2} \rho(X + (H\circ\delta^n \cdot S)_T -
    C_T(H\circ\delta^n)) \leq \pi(X)+\varepsilon. 
\end{equation}
  Since $\delta^n \in \H_M$ for all $M$ large enough, one obtains 
  $\pi^M(X)  \leq \pi(X)+\varepsilon$ by \eqref{eq:objective1NN} and \eqref{eq:auxEq2}, as desired. 
\end{proof}

\subsection{Numerical solution for OCE-risk  measures}\label{subsec:numericalSol}

While Theorem~\ref{thm:universalApprox} and
Proposition~\ref{lem:approx} give a theoretical justification for
using hedging strategies built from neural networks, we now turn to
computational considerations: how can we calculate a (close-to)
optimal parameter $\theta \in \Theta_M$ for \eqref{eq:objective1NN}?

To explain the key ideas we focus on the case when $\rho$ is an OCE
risk measure (see \eqref{eq:OCE}) and no trading constraints are present, the case of general risk measures is treated below.

Inserting the definition of $\rho$, see \eqref{eq:OCE}, into
\eqref{eq:objective1NN}, the optimization problem can be rewritten as
\[ 
\pi^M(-Z) = \inf_{\bar{\theta} \in \Theta_M}\inf_{w \in \R}
  \left\lbrace w + \E[\ell(Z - (\delta^{\bar{\theta}} \cdot S)_T +
    C_T(\delta^{\bar{\theta}})-w)] \right\rbrace = \inf_{\theta \in
    \Theta} J(\theta), \] where $\Theta = \R \times \Theta_M$ and for
$\theta = (w,\bar{\theta}) \in \Theta$,
\begin{equation}\label{eq:Jdef} J(\theta):= w+\E[\ell(Z -
  (\delta^{\bar{\theta}} \cdot S)_T +
  C_T(\delta^{\bar{\theta}})-w)]. \end{equation}
Generally, to find a local minimum of a differentiable function $J$, one may use a \textit{gradient descent} algorithm: Starting with an initial guess $\theta^{(0)}$, one iteratively defines 
\begin{equation}\label{eq:gradientDescent} \theta^{(j+1)} =
  \theta^{(j)} - \eta_j \nabla J_j(\theta^{(j)}), \end{equation}
for some (small) $\eta_j > 0$, $j \in \N$ and with $J_j = J$.
Under suitable assumptions on $J$ and the sequence $\{\eta_j\}_{j \in \N}$, $\theta^{(j)}$ converges to a local minimum of $J$ as $j \to \infty$. Of course, the success and feasibility of this algorithm crucially depends on two points: Firstly, can one avoid finding a local minimum instead of a global one? Secondly, can $\nabla J$ be calculated efficiently?

One of the key insights of deep learning is that for cost functions
$J$ built based on neural networks both of these problems can be dealt
with simultaneously by using a variant of \textit{stochastic gradient
  descent} and the \textit{(error) backpropagation} algorithm. What
this means in our context is that in each step $j$ the expectation in
\eqref{eq:Jdef} (which is in fact a weighted sum over all elements of
the finite, but potentially very large sample space $\Omega$) is
replaced by an expectation over a randomly (uniformly) chosen subset
of $\Omega$ of size $N_{\text{batch}} \ll N$, so that $J_j$ used in
the update \eqref{eq:gradientDescent} is now given as
\[ J_j(\theta) = w+ \sum_{m=1}^{N_{\text{batch}}} \ell(Z(\omega_m^{(j)})
  - (\delta^{\bar{\theta}} \cdot S)_T(\omega_m^{(j)}) +
  C_T(\delta^{\bar{\theta}})(\omega_m^{(j)})-w)
  \frac{N}{N_{\text{batch}}} \P[\{\omega_m^{(j)}\}] \] for some
$\omega_1^{(j)},\ldots, \omega_{N_{\text{batch}}}^{(j)} \in
\Omega$. This is the simplest form of the (minibatch) stochastic
gradient algorithm. Not only does it make the gradient computation a
lot more efficient (or possible at all, if $N$ is large), but it also
avoids getting stuck in local minima: even if $\theta^{(j)}$ arrives
at a local minimum at some $j$, it moves on afterwards (due to the
randomness in the gradient). In order to calculate the gradient of
$J_j$ for each of the terms in the sum, one may now rely on the
compositional structure of neural networks. If $\ell$,
$c$ and $\activF$ are sufficiently differentiable and the
derivatives are available in closed form, then one may use the chain
rule to calculate the gradient of $F^{\bar{\theta}_k}$ with respect to
$\theta$ analytically and the same holds for the gradient of $J_j$.
Furthermore, these analytical expressions can be evaluated very
efficiently using the so called backpropagation algorithm (see
subsequent section).

While this certainly answers the second question posed above
(efficiency), the first one (local minima) is only partially resolved,
as there is no general result guaranteeing convergence to the global
minimum in a reasonable amount of time. However, it is common belief
that for sufficiently large neural networks, it is possible to arrive
at a sufficiently low value of the cost function in a reasonable
amount of time, see \cite[Chapter~8]{Goodfellow2016}.

Finally, note that for the experiments in Section~\ref{sec:Heston}
below we have used Adam, a more refined version of the stochastic
gradient algorithm, as introduced in \cite{Kingma2015} and also
discussed in \cite[Chapter~8.5.3]{Goodfellow2016}.

\begin{remark} In the experiments in Section~\ref{sec:Heston} below,
  the functions $\ell$, $c$ and $\activF$ are
  continuous, but have only piecewise continuous
  derivatives. Nevertheless, similar techniques can be applied.
\end{remark}

\begin{remark} Numerically, trading constraints can be handled by introducing Lagrange-multipliers or by imposing infinite trading cost outside the allowed trading range. Certain types of constraints can also be dealt with by the choice of activation function: for example, no short-selling constraints can be enforced by choosing a non-negative activation function $\activF$. A systematic numerical treatment will be left for future research. 
\end{remark}

\subsection{Certainty Equivalent of Exponential Utility}
The entropic risk measure \eqref{eq:entropic} is a special case of an OCE risk measure, as explained in example~\ref{ex:entropic}. However, when applying the methodology explained in Section~\ref{subsec:numericalSol}, there is no
need to minimize over $w$: we may directly insert \eqref{eq:entropic} into \eqref{eq:objective1NN} to write
\[
\pi^M(-Z) =  \frac1\lambda \log  \inf_{\theta \in
    \Theta_M} J(\theta), 
    \]
        where 
\begin{equation}\label{eq:JEntropic} J(\theta):= \E[\ \exp(-\lambda [-Z +
  (\delta^{\theta} \cdot S)_T -
  C_T(\delta^{\theta})])\ ]. \end{equation}
A close-to-optimal $\theta \in \Theta_M$ can then be found numerically as above.

\subsection{Extension to general risk measures}
As explained in Section~\ref{subsec:numericalSol}, for OCE risk
measures the optimal hedging problem \eqref{eq:objective1NN} is
amenable to deep learning optimization techniques (i.e. variants of
stochastic gradient descent) via \eqref{eq:Jdef}. The key ingredient
for this is that the objective $J$ satisfies
\begin{itemize}
\item [(ML1)] the gradient of $J$ decomposes into a sum over the
  samples, i.e.
  $\nabla_\theta J(\theta) = \sum_{m=1}^N \nabla_\theta
  J(\theta,\omega_m) $ and
\item [(ML2)] $\nabla_\theta J(\theta,\omega_m)$ can be calculated
  efficiently for each $m$, i.e. using backpropagation.
\end{itemize}

The goal of the present section is to show that for a general class of
convex risk measures (including all coherent ones) one can approximate
\eqref{eq:objective1} by a minimax problem over neural networks and
that the objective functional of this approximate problem also has
these two key properties, making it amenable to deep learning
optimization techniques.

Denote by $\Pc$ the set of probability measures on $(\Omega,\Fc)$. The
following result serves as a starting point:
\begin{theorem}[Robust representation of convex risk measures] Suppose
  $\rho \colon \Xc \to \R$ is a convex risk measure. Then
  $\rho$ can be written as
  \begin{equation}\label{eq:rhoRobustFS} \rho(X)= \max_{\Q \in
      \Pc}\left(\E_\Q[-X]-\alpha(\Q)\right), \quad X \in
    \Xc,\end{equation}
  where $\alpha(\Q):= \sup_{X \in \Xc}\left( \E_\Q[-X] - \rho(X) \right).$
\end{theorem}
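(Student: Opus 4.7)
The plan is to obtain this as an instance of the Fenchel--Moreau biconjugation theorem, exploiting the fact that on the finite probability space $\Omega$ the space $\Xc$ is just $\R^N$ and $\rho$ is a real-valued convex function defined on all of $\R^N$.

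First I would identify $\Xc$ with $\R^N$ via $X \leftrightarrow (X(\omega_1),\ldots,X(\omega_N))$ and endow it with the duality pairing $\langle \xi, X\rangle := \sum_i \xi_i X(\omega_i)$ for $\xi\in\R^N$. Because $\rho\colon\R^N\to\R$ is convex and everywhere finite, it is automatically continuous. The Fenchel--Moreau theorem therefore gives $\rho(X)=\rho^{**}(X)$, where $\rho^*(\xi)=\sup_{X\in\Xc}(\langle \xi, X\rangle - \rho(X))$.

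Next I would use the three defining properties of $\rho$ to determine the effective domain of $\rho^*$. Monotonicity: for any $Y\geq 0$ the substitution $X\mapsto X+tY$ with $t\to\infty$ forces $\langle \xi, Y\rangle \leq 0$, so $\xi\leq 0$ componentwise. Cash-invariance: substituting $X\mapsto X+c$ and letting $c$ vary over $\R$ forces $\sum_i \xi_i = -1$. Hence $\xi = -\Q$ for a (signed) measure with total mass one and $-\xi\geq 0$, i.e.\ $\Q\in\Pc$. For such $\xi$, $\langle \xi, X\rangle = -\E_\Q[X] = \E_\Q[-X]$, so $\rho^*(-\Q)$ coincides with $\alpha(\Q)$ in the statement, while $\rho^*(\xi)=+\infty$ outside $-\Pc$.

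Plugging this into the biconjugate identity yields
\[
\rho(X) \;=\; \sup_{\xi\in\R^N}\bigl(\langle \xi, X\rangle - \rho^*(\xi)\bigr) \;=\; \sup_{\Q\in\Pc}\bigl(\E_\Q[-X] - \alpha(\Q)\bigr).
\]
To upgrade $\sup$ to $\max$ I would note that $\Pc$ is a compact subset of $\R^N$ (the simplex) and that $\alpha$ is the supremum of affine continuous functions of $\Q$, hence lower semi-continuous; the map $\Q\mapsto \E_\Q[-X]-\alpha(\Q)$ is therefore upper semi-continuous on the compact set $\Pc$, so it attains its supremum. The only delicate step is the domain identification: ruling out $\xi\notin -\Pc$ via the unbounded translations allowed by monotonicity and cash-invariance. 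Once that is in place the rest is direct biconjugation plus a compactness argument.
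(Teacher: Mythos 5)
Your proof is correct, but it takes a genuinely different route from the paper's. The paper disposes of this theorem by citation: it invokes the general representation theorem for convex risk measures on $L^\infty$ (\cite[Theorem~4.16 and Remark~4.17]{Foellmer2016}) and merely observes that on a finite $\Omega$ the finitely additive normalized set functions appearing there coincide with $\Pc$. You instead give a self-contained finite-dimensional convex-analysis argument: identify $\Xc$ with $\R^N$, note that a finite-valued convex function on $\R^N$ is continuous so Fenchel--Moreau applies, and then pin down the effective domain of the conjugate using monotonicity (which forces $\xi\leq 0$ via $X\mapsto X+tY$, $Y\geq 0$, $t\to\infty$) and cash-invariance (which forces $\sum_i\xi_i=-1$), so that $\rho^*$ is finite only on $-\Pc$ and there equals $\alpha$. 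All of these steps check out, including the upgrade from $\sup$ to $\max$: $\alpha$ is a supremum of affine functions of $\Q$, hence lower semi-continuous, so $\Q\mapsto\E_\Q[-X]-\alpha(\Q)$ is upper semi-continuous on the compact simplex and attains its supremum, which is finite because it equals $\rho(X)$. What your approach buys is transparency and self-containment in the finite setting --- in particular the attainment of the maximum falls out of compactness rather than being imported from the weak-$*$ machinery --- and it makes explicit the continuity of $\rho$ as a convex function on $\R^N$, a fact the paper separately relies on later in the proof of Proposition~\ref{lem:approx}. What the paper's citation buys is brevity and a direct link to the general infinite-dimensional theory, which matters if one wants to relax the finiteness assumption on $\Omega$ as contemplated in the paper's opening footnote.
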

\begin{proof} Since for $\Omega$ finite the set of probability
  measures $\Pc$ coincides with the set of finitely additive,
  normalized set functions (appearing in
  \cite[Theorem~4.16]{Foellmer2016}), the present statement follows
  directly from the cited theorem and
  \cite[Remark~4.17]{Foellmer2016}.
\end{proof}
The function $\alpha \colon \Pc \to \R$ is called the (minimal)
penalty function of the risk measure $\rho$.

Since $\Omega$ is finite, $\Pc$ can be identified with the standard
$N-1$ simplex in $\R^N$ and so \eqref{eq:rhoRobustFS} is an
optimization over $\R^N$. However, $N$ is very large in our context
and so the representation \eqref{eq:rhoRobustFS} is of little use for
numerical calculations. The next result shows that $\rho(X)$ can be
approximated by an optimization problem over a lower-dimensional
space. To state it, let us define the set
$\mathcal{L} \subset \Xc$ of log-likelihoods by
\[ \mathcal{L}:= \{ f \in \Xc \; : \; \E[\exp(f)]=1\}, \]
define $\bar{\alpha} \colon \mathcal{L} \to \R$ by
$\bar{\alpha}(f)=\alpha(\exp(f) \d \P)$ for any $f \in \mathcal{L}$
and write $\Pc_{eq}$ for the set of probability measures on
$(\Omega,\Fc)$, which are equivalent to $\P$. Furthermore, one may
view $\bar{I}=(I_{0},\ldots,I_{n})$ as a map
$\Omega \to \R^{r(n+1)}$.

\begin{theorem}\label{thm:NNriskmeas}
  Suppose
  \begin{itemize} \item[(i)] $\alpha(\Q) < \infty$ for some
    $\Q \in \Pc_{eq}$,
  \item[(ii)] $\bar{\alpha}$ is continuous,
  \item[(iii)] $\Fc = \Fc_T$.
  \end{itemize}
  Then for any $X \in \Xc$,
  $\rho(X)=\lim_{M \to \infty} \rho^M(X)$, where
  \begin{equation}\label{eq:rhoNNDef}
    \rho^M(X):= \sup_{\substack{\theta \in \Theta_{M,r(n+1),1} \\ \E[\exp(F^\theta\circ\bar{I})] = 1}}\left(\E[-X\exp(F^\theta\circ \bar{I})]-\bar{\alpha}(F^\theta \circ \bar{I})\right).
  \end{equation}
\end{theorem}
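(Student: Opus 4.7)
The plan is to show the two inequalities $\limsup_{M \to \infty} \rho^M(X) \leq \rho(X)$ and $\liminf_{M \to \infty} \rho^M(X) \geq \rho(X)$ separately, with the hard work concentrated in the second one.

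\textbf{Upper bound.} For any $\theta$ admissible in \eqref{eq:rhoNNDef}, the random variable $\exp(F^\theta \circ \bar I)$ is strictly positive and integrates to $1$ under $\P$, so it is the density of some $\Q \in \Pc_{eq}$ with $\log(\d\Q/\d\P) = F^\theta \circ \bar I \in \mathcal{L}$. Then $\E[-X\exp(F^\theta\circ\bar I)] - \bar\alpha(F^\theta\circ\bar I) = \E_\Q[-X] - \alpha(\Q)$, which is dominated by the robust representation \eqref{eq:rhoRobustFS}. Taking the supremum over $\theta$ yields $\rho^M(X) \leq \rho(X)$ for every $M$.

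\textbf{Reduction to $\Pc_{eq}$.} Fix $\varepsilon > 0$. Since $\Omega$ is finite, $\Pc$ is compact and the supremum in \eqref{eq:rhoRobustFS} is attained at some $\Q^\ast$ with $\alpha(\Q^\ast)<\infty$. Using assumption~(i), pick $\Q_0 \in \Pc_{eq}$ with $\alpha(\Q_0)<\infty$ and consider $\Q_\lambda := (1-\lambda)\Q^\ast + \lambda \Q_0 \in \Pc_{eq}$. Convexity of $\alpha$ (which follows from the Fenchel structure of \eqref{eq:rhoRobustFS}) together with linearity of $\Q \mapsto \E_\Q[-X]$ gives $\E_{\Q_\lambda}[-X] - \alpha(\Q_\lambda) \to \rho(X)$ as $\lambda \to 0$. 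Hence we may select $\Q \in \Pc_{eq}$ with $\E_\Q[-X] - \alpha(\Q) \geq \rho(X) - \varepsilon$.

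\textbf{Neural network approximation of the log-density.} Set $f := \log(\d\Q/\d\P) \in \mathcal{L}$. By (iii), $f$ is $\Fc_T = \sigma(\bar I)$-measurable, so $f = g \circ \bar I$ for some measurable $g \colon \R^{r(n+1)} \to \R$. Let $\mu$ denote the law of $\bar I$ under $\P$. Since $\Omega$ is finite, $g \in L^1(\R^{r(n+1)},\mu)$, so Theorem~\ref{thm:universalApprox} yields a sequence $G_j \in \NN_{\infty,r(n+1),1}$ with $G_j \to g$ in $L^1(\mu)$. Passing to a subsequence and using $\P[\{\omega\}]>0$ for every $\omega$, we obtain pointwise convergence $G_j\circ\bar I(\omega) \to g(\omega)$ on all of $\Omega$. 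To enforce the constraint in \eqref{eq:rhoNNDef}, define $c_j := \log \E[\exp(G_j\circ\bar I)]$ and $\widetilde G_j := G_j - c_j$; subtracting a scalar from the output of a neural network amounts to adjusting the last bias vector, so $\widetilde G_j \in \NN_{\infty,r(n+1),1}$. By pointwise convergence and finiteness of $\Omega$, $c_j \to \log \E[\exp(f)] = 0$, so $\widetilde G_j \circ \bar I \to f$ pointwise on $\Omega$, and by construction $\E[\exp(\widetilde G_j \circ \bar I)] = 1$.

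\textbf{Passing to the limit.} Since $\Omega$ is finite, $\mathcal{L}$ embeds into $\R^N$ and pointwise convergence on $\Omega$ is just Euclidean convergence there. Thus assumption~(ii) gives $\bar\alpha(\widetilde G_j \circ \bar I) \to \bar\alpha(f) = \alpha(\Q)$, and likewise $\E[-X \exp(\widetilde G_j \circ \bar I)] \to \E_\Q[-X]$. Because $\bigcup_M \NN_{M,r(n+1),1} = \NN_{\infty,r(n+1),1}$ and the subsets are nested, for every $j$ there exists $M_j$ with $\widetilde G_j \in \NN_{M_j,r(n+1),1}$. Picking $j$ large enough that both objective terms are within $\varepsilon$ of their limits yields $\rho^{M_j}(X) \geq \E_\Q[-X] - \alpha(\Q) - 2\varepsilon \geq \rho(X) - 3\varepsilon$. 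Monotonicity of $M \mapsto \rho^M(X)$ from the nesting of $\NN_{M,\cdot,\cdot}$ then gives $\liminf_{M\to\infty} \rho^M(X) \geq \rho(X) - 3\varepsilon$, and letting $\varepsilon \downarrow 0$ completes the proof.

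\textbf{Main obstacles.} The delicate points are (a) the reduction to $\Pc_{eq}$, because $\alpha$ may take the value $+\infty$ and assumption (i) is used here in an essential way, and (b) the normalization step: the approximating neural networks must be renormalized to satisfy the exponential constraint without leaving the class $\NN_{\infty,r(n+1),1}$, which is the reason we subtract a constant rather than divide. Once these are handled, the continuity assumption (ii) is precisely what allows the penalty term to pass to the limit alongside the linear term.
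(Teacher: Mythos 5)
Your proof is correct, and its core coincides with the paper's Step~2: approximate the log-density of an $\varepsilon$-optimal equivalent measure by neural networks via Theorem~\ref{thm:universalApprox}, pass to a pointwise-convergent subsequence, restore the constraint $\E[\exp(\cdot)]=1$ by subtracting the scalar $c_j=\log\E[\exp(G_j\circ\bar I)]$ (which stays inside the network class), and use continuity of $\bar\alpha$ together with finiteness of $\Omega$ and the nesting of the $\NN_{M,r(n+1),1}$ to conclude. Where you genuinely diverge is in the reduction to equivalent measures. The paper identifies $\Xc$ with $L^\infty(\P)$ and cites \cite[Theorem~4.43]{Foellmer2016} to upgrade \eqref{eq:rhoRobustFS} to a supremum over $\Pc_{eq}$, then rewrites $\Pc_{eq}$ via log-likelihoods and assumption (iii) to obtain the intermediate representation \eqref{eq:rhoRepresentation} over measurable functions of $\bar I$. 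You instead prove the reduction from scratch: take a maximizer $\Q^*$ of \eqref{eq:rhoRobustFS} (available since the representation is stated with a max; alternatively $\Pc$ is a compact simplex and $\alpha$ is lower semicontinuous as a supremum of affine maps), mix it with the $\Q_0\in\Pc_{eq}$ furnished by (i), and use convexity of $\alpha$ plus linearity of $\Q\mapsto\E_\Q[-X]$ to see that $\Q_\lambda=(1-\lambda)\Q^*+\lambda\Q_0$ is equivalent to $\P$ (strict positivity on every atom) and $\varepsilon$-optimal for small $\lambda$. This is more elementary and self-contained, exploiting finiteness of $\Omega$ and avoiding the external citation, at the cost of a few extra lines. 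Correspondingly, your upper bound $\rho^M(X)\le\rho(X)$ is obtained directly (each admissible $F^\theta$ induces an element of $\Pc_{eq}$ dominated by \eqref{eq:rhoRobustFS}) rather than through $\NN_{M,r(n+1),1}\subset\Mc$ and \eqref{eq:rhoRepresentation}. Both routes invoke hypotheses (i)--(iii) in exactly the same places, and both are complete.
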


\begin{proof}
  We proceed in two steps. In a first step we show that for any
  $X \in \Xc$ one may write
  \begin{equation} \label{eq:rhoRepresentation} \rho(X) =
    \sup_{\substack{\bar{f} \in \mathcal{M} \\
        \E[\exp(\bar{f}\circ\bar{I})] = 1}}
    \left(\E[-X\exp(\bar{f}\circ \bar{I})]-\bar{\alpha}(\bar{f}\circ
      \bar{I})\right),
  \end{equation}
  where $\Mc$ denotes the set of measurable functions mapping from
  $\R^{r(n+1)} \to \R$.  In the second step we rely on
  \eqref{eq:rhoRepresentation} to prove the statement.

  \textit{Step 1:} Since $\P[\{\omega_i\}]>0$ for all $i$,
  $\Xc$ coincides with $L^\infty(\Omega,\mathcal{F},\P)$ and
  $\rho$ is law-invariant. Thus by (i) and
  \cite[Theorem~4.43]{Foellmer2016} one may write
  \begin{equation}\label{eq:auxEq3} \rho(X)= \sup_{\Q \in
      \Pc_{eq}}\left(\E_\Q[-X]-\alpha(\Q)\right), \quad X \in
    \Xc.\end{equation}
  Note that $\Pc_{eq}$ may be written in terms of $\mathcal{L}$ as
  \begin{equation}\label{eq:equivProbMeas} \Pc_{eq} = \left\lbrace
      \exp(f) \d \P \; : \; f \in \mathcal{L}
    \right\rbrace. \end{equation}
  Furthermore, using (iii) one obtains 
  \begin{equation}\label{eq:auxEq4} \Xc = \{ \bar{f} \circ
    \bar{I} \; : \; \bar{f} \in \Mc \}. \end{equation}

  Combining \eqref{eq:auxEq3}, \eqref{eq:equivProbMeas} and the
  definition of $\bar{\alpha}$ one obtains
  \[ \rho(X) = \sup_{f \in
      \mathcal{L}}\left(\E[-X\exp(f)]-\bar{\alpha}(f)\right), \] which
  can be rewritten as \eqref{eq:rhoRepresentation} by using
  \eqref{eq:auxEq4}.

  \textit{Step 2:} Note that one may also write \eqref{eq:rhoNNDef} as
  \begin{equation}\label{eq:rhoNNDef2}
    \rho^M(X)= \sup_{\substack{f \in \NN_{M,r(n+1),1} \\ \E[\exp(f\circ\bar{I})] = 1}}\left(\E[-X\exp(f\circ \bar{I})]-\bar{\alpha}(f\circ \bar{I})\right).
  \end{equation}
  Combining \eqref{eq:rhoNNDef2} with \eqref{eq:rhoRepresentation} and
  using
  $\NN_{M,r(n+1),1} \subset \NN_{M+1,r(n+1),1} \subset \mathcal{M}$,
  one obtains that $\rho^{M}(X) \leq \rho^{M+1}(X) \leq \rho(X)$ for
  all $M \in \N$. Thus it suffices to show that for any
  $\varepsilon > 0$ there exists $M \in \N$ such that
  $\rho^M(X) \geq \rho(X)-\varepsilon$.

  By \eqref{eq:rhoRepresentation}, for any $\varepsilon > 0$ one finds
  $\bar{f} \in \mathcal{M}$ such that
  \begin{align}
    \label{eq:auxEq5a} \E[\exp(\bar{f}\circ \bar{I})] & = 1, \\
    \label{eq:auxEq5b} \rho(X) - 2 \varepsilon & \leq \E[-X\exp(\bar{f}\circ \bar{I})]-\bar{\alpha}(\bar{f}\circ \bar{I}).
  \end{align}
  Precisely as in the proof of Proposition~\ref{lem:approx}, one may
  use Theorem~\ref{thm:universalApprox} to find
  $f^{(n)} \in \NN_{\infty,r(n+1),1}$ such that $\P$-a.s.,
  $f^{(n)}\circ \bar{I}$ converges to $\bar{f} \circ \bar{I}$ as
  $n \to \infty$. Combining this with \eqref{eq:auxEq5a}, one obtains
  that for all $n$ large enough,
  $ c_n:= \log(\E[\exp(f^{(n)} \circ \bar{I})]) $ is well-defined and
  that $\bar{f}^{(n)}\circ \bar{I}$ also converges $\P$-a.s. to
  $ \bar{f} \circ \bar{I}$, as $n \to \infty$, where
  $\bar{f}^{(n)}:= f^{(n)} - c_n$. Using this, \eqref{eq:auxEq5b} and
  assumption (ii), for some (in fact all) $n \in \N$ large enough one
  obtains
  \begin{equation}\label{eq:auxEq6} \rho(X) - \varepsilon \leq
    \E[-X\exp(\bar{f}^{(n)} \circ \bar{I})]-\bar{\alpha}(\bar{f}^{(n)}
    \circ \bar{I}). \end{equation}
  From $\NN_{\infty,r(n+1),1}-c_n = \NN_{\infty,r(n+1),1}$ and from the choice of $\NN_{M,r(n+1),1}$, one has $\bar{f}^{(n)}\in \NN_{M,r(n+1),1}$ for $M$ large enough. By combining this with \eqref{eq:auxEq6} and the choice of $c_n$ one obtains
  \[\rho(X)-\varepsilon \leq \rho^M(X), \]
  as desired.
\end{proof}

Combining \eqref{eq:objective1NN} and \eqref{eq:rhoNNDef}, one thus
approximates \eqref{eq:objective1} for $X=-Z$ by solving
\begin{equation} \label{eq:minimaxObj} \inf_{\theta_0 \in \Theta_{M}}
  \sup_{\theta_1 \in \Theta_{M,r(n+1),1}} J(\theta), \end{equation}
where $\theta = (\theta_0,\theta_1)$,
\[ J(\theta):= \E\left[-\mathrm{PL}(Z,0,\delta^{\theta_0})
    \exp(F^{\theta_1} \circ \bar{I})\right]
  -\bar{\alpha}(F^{\theta_1}\circ \bar{I}) - \lambda_0
  (\E[\exp(F^{\theta_1} \circ \bar{I})] -1 ) \] and $\lambda_0$ is a
Lagrange multiplier.

We conclude this section by arguing that the objective $J$ in
\eqref{eq:minimaxObj} indeed satisfies (ML1) and (ML2). This is
standard (c.f. Section~\ref{subsec:numericalSol}) for all terms in the
sum except for $\bar{\alpha}(F^{\theta_1}\circ \bar{I})$ and so we
only consider this term.

Recall that $\Omega$ is finite and consists of $N$ elements, thus
$\Xc=\{X \colon \Omega \to \R \}$ can be identified with
$\R^N$. As for standard backpropagation the compositional structure
can be used for efficient computation:

\begin{proposition}
  Suppose $\bar{\alpha}$ can be extended to
  $\bar{\alpha} \colon \Xc \to \R$ continuously
  differentiable, $\activF$ is continuously differentiable and
  $\NN_{M,r(n+1),1}$ is the set of neural networks with a fixed
  architecture (see Remark~\ref{rmk:NNex}). Then
  $J(\theta_1):=\bar{\alpha}(F^{\theta_1}\circ \bar{I})$,
  $\theta_1 \in \Theta_{M,r(n+1),1}$ is continuously differentiable
  and satisfies (ML1).
\end{proposition}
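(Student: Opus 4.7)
The plan is to reduce the claim to a direct chain-rule computation by exploiting that $\Omega$ is finite. Concretely, I would identify $\Xc$ with $\R^N$ via $X \mapsto (X(\omega_1),\dots,X(\omega_N))$ and then factor $J$ as the composition $J = \bar{\alpha} \circ \Phi$, where $\Phi \colon \Theta_{M,r(n+1),1} \to \R^N$ is defined by
\[
\Phi(\theta_1) := \bigl(F^{\theta_1}(\bar{I}(\omega_1)),\,\ldots,\,F^{\theta_1}(\bar{I}(\omega_N))\bigr).
\]
Because we are working with a fixed architecture, each coordinate $\theta_1 \mapsto F^{\theta_1}(\bar{I}(\omega_i))$ is a finite composition of affine maps (whose coefficients are components of $\theta_1$) with componentwise applications of the $C^1$ activation $\activF$. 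Hence every coordinate of $\Phi$ is $C^1$ in $\theta_1$, so $\Phi$ itself is $C^1$.

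Once this is set up, continuous differentiability of $J$ is immediate: $\bar{\alpha} \colon \R^N \to \R$ is $C^1$ by assumption and $\Phi$ is $C^1$, so the chain rule yields
\[
\nabla_{\theta_1} J(\theta_1) \;=\; \sum_{i=1}^N (\partial_i \bar{\alpha})(\Phi(\theta_1))\,\nabla_{\theta_1} F^{\theta_1}(\bar{I}(\omega_i)),
\]
and both factors depend continuously on $\theta_1$. This establishes $C^1$-regularity.

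To verify (ML1), I would set
\[
\nabla_{\theta_1} J(\theta_1,\omega_i) \;:=\; (\partial_i \bar{\alpha})(\Phi(\theta_1))\,\nabla_{\theta_1} F^{\theta_1}(\bar{I}(\omega_i)),
\]
so that $\nabla_{\theta_1} J(\theta_1) = \sum_{i=1}^N \nabla_{\theta_1} J(\theta_1,\omega_i)$ by the formula above. This is exactly the per-sample decomposition required by (ML1). The only conceptual step worth flagging is that the scalar prefactor $(\partial_i \bar{\alpha})(\Phi(\theta_1))$ still depends on all samples through $\Phi(\theta_1)$; this is an unavoidable feature of working with a general penalty function $\bar\alpha$, but it does not affect the sum decomposition that (ML1) requires (nor does it obstruct backpropagation of the vector factor $\nabla_{\theta_1} F^{\theta_1}(\bar{I}(\omega_i))$, which is a standard per-sample network gradient).

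There is no real obstacle here — the proof is essentially bookkeeping around the chain rule, once one uses finiteness of $\Omega$ to view $\bar{\alpha}$ as a $C^1$ function on a finite-dimensional Euclidean space and the fixed-architecture assumption to guarantee smoothness of $F^{\theta_1}$ in $\theta_1$.
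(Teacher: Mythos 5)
Your proposal is correct and follows essentially the same route as the paper: both factor $J$ as $\bar{\alpha}$ composed with the evaluation map $\theta_1 \mapsto (F^{\theta_1}(\bar{I}(\omega_m)))_{m=1,\ldots,N}$ and apply the chain rule to obtain the per-sample decomposition $\partial_\theta \bar{\alpha}(F\circ \bar{I}) = \sum_{m=1}^N \nabla \bar{\alpha}(F\circ \bar{I})(\omega_m)\,\partial_\theta F(\bar{I}(\omega_m))$, which is exactly (ML1). Your additional remark that the scalar prefactor depends on all samples through $\Phi(\theta_1)$ is a fair observation but does not affect the required sum structure.
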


\begin{proof}
  Note that $F = F^{\theta_1}$ is parametrized by the matrices $A^\ell$
  and vectors $b^\ell$, $\ell=1,\ldots,L$, and that one may consider all
  partial derivatives separately.  Given
  $\bar{\alpha} \colon \Xc \to \R$ and
  $\nabla \bar{\alpha} \colon \Xc \to \Xc$, one thus
  aims at calculating
  $\partial_{A^{\ell}_{i,j}} \bar{\alpha}(F\circ \bar{I})$ and
  $\partial_{b_i^\ell} \bar{\alpha}(F\circ \bar{I})$ for
  $\ell = 1,\ldots, L, i = 1,\ldots,N_\ell, j = 1,\ldots,N_{\ell-1}$. This can
  be done by the chain rule: For $\theta \in \{A^{\ell}_{i,j}, b_i^\ell\}$,
  one has
  \[ \partial_\theta \bar{\alpha}(F\circ \bar{I}) = \sum_{m=1}^N
    \nabla \bar{\alpha}(F\circ \bar{I})(\omega_m)\partial_\theta
    F(\bar{I}(\omega_m)) \] and in particular (ML1) holds. \end{proof}

Furthermore, in the notation of the proof, for any $m = 1,\ldots N$
the derivative $\partial_\theta F(\bar{I}(\omega_m))$ can be
calculated using standard backpropagation algorithm (preceded by a
forward iteration) and so (ML2) holds as well. For the reader's
convenience we state it here: One sets $x^0=\bar{I}(\omega_m)$,
iteratively calculates $x^\ell:=F_\ell(x^{\ell-1})$ for $\ell=1,\ldots,L-1$ and
$x^L:=W_L(x^{L-1})$. Then (this is the backward pass) one sets
$J^L := A^L$ and calculates iteratively
$J^{\ell}=J^{\ell+1} dF_{\ell}(x^{\ell-1})$ for $\ell = L-1,\ldots,1$, where
\[
  dF_{\ell}(x^{\ell-1}) = \mathrm{diag}(\activF'(W_\ell x^{\ell-1})) A^\ell.\]
From this one may use again the chain rule to obtain for any
$\ell = 1,\ldots L, i = 1,\ldots,N_\ell, j = 1,\ldots,N_{\ell-1}$ the
derivatives of $F$ with respect to the parameters as
\[\begin{aligned} \partial_{A^{\ell}_{i,j}} F(\bar{I}(\omega_m)) & = J^{\ell+1}_i \activF'((W_\ell x^{\ell-1})_i) x^{\ell-1}_j \\
    \partial_{b_i^\ell} F(\bar{I}(\omega_m)) & = J^{\ell+1}_i
    \activF'((W_\ell x^{\ell-1})_i).
  \end{aligned}\]

\section{Numerical experiments and results}\label{sec:Heston}
After having introduced the optimal hedging problem
\eqref{eq:objective1} in Section~\ref{sec:Hedging} and described in
Section~\ref{sec:NeuralNets} how one may numerically approximate the
solution by \eqref{eq:objective1NN} using neural networks, we now turn
to numerical experiments to illustrate the feasibility of the
approach. We start by explaining in
Section~\ref{subsec:implementation} the modeling choices in
detail. The remainder of this section will then be devoted to
examining the following three questions:
\begin{itemize}
\item Section~\ref{subsec:benchmark}: How does neural network hedging
  (for different risk-preferences) compare to the benchmark in a
  Heston model without transaction costs?
\item Section~\ref{subsec:asymptotics}: What is the effect of
  proportional transaction costs on the exponential utility
  indifference price?
\item Section~\ref{subsec:highD}: Is the numerical method scalable to
  higher dimensions?
\end{itemize}

\subsection{Setting and Implementation}\label{subsec:implementation}
For the results presented here we have chosen a time horizon of $30$
trading days with daily rebalancing.  Thus, $T=30/365$, $n=30$ and the
trading dates are $t_i=i/365$, $i=0,\ldots,n$. As explained in
Section~\ref{sec:NeuralNets} and Remark~\ref{rmk:recurrent}, the
number of units $\delta_{t_i} \in \R^d$ that the agent decides to hold
in each of the instruments at $t_i$ is parametrized by a semi-recurrent
neural network: we set
$\delta_{k}^\theta = F^{\theta_k}(I_{k},\delta_{k-1}^\theta)$
where $F^{\theta_k}$ is a feed forward neural network with two hidden
layers and $I_{k}=\Phi(S_{0},\ldots,S_{k})$ for some
$\Phi\colon \R^{(k+1)d} \to \R^d$ specified below. More precisely, in
the notation of Definition~\ref{def:nn}, $F^{\theta_k}$ is a neural
network with $L=3$, $N_0=2d$, $N_1=N_2=d+15$, $N_3=d$ and the
activation function is always chosen as $\activF(x)=\max(x,0)$. The
weight matrices and biases are the parameters to be optimized in
\eqref{eq:objective1NN}. Note that these are different for each $k$.

Having made these choices, the algorithm outlined in
Section~\ref{sec:NeuralNets} can now be used for approximate 
 hedging \textit{in any market situation}: given sample
trajectories of the hedging instruments $S(\omega_m)$, samples of the
payoff $Z(\omega_m)$ and associated weights $\P[\{\omega_m \}]$ for
$m=1,\ldots,N$ (on a finite probability space
$\Omega=\{\omega_1,\ldots,\omega_N\}$), for any choice of transaction
cost structure $c$ and any risk measure $\rho$
one may now use the algorithm outlined in Section~\ref{sec:NeuralNets}
to calculate close-to optimal hedging strategies and approximate
minimal prices. Of course, for a path-dependent derivative with payoff
$Z=G(S_0,\ldots,S_T)$ with $G\colon (\R^{d})^{n+1} \to \R$ one obtains
samples of the payoff by simply evaluating $G$ on the sample
trajectories of $S$.

Different risk measures $\rho$, transaction cost functions
$c$ and payoffs $Z$ will be used in the examples
and so these are described separately in each of the subsequent
sections. To illustrate the feasibility of the algorithm and have a
benchmark at hand for comparison (at least in the absence of
transaction costs), we have chosen to generate the sample paths of $S$
from a standard stochastic volatility model under a risk-neutral
measure $\P$. Thus in most of the examples below, the process $S$
follows (a discretization of) a Heston model, see the beginning of
Section~\ref{subsec:benchmark} below.  But we stress again that, as
explained above, the algorithm is \textit{model independent} in the
sense that no information about the Heston model is used except for
the (weighted) samples of the price and variance process.

The algorithm has been implemented in Python, using Tensorflow to
build and train the neural networks. To allow for a larger learning
rate, the technique of batch normalization (see \cite{Ioffe2015} and
\cite[Chapter~8.7.1]{Goodfellow2016}) is used in each layer of each
network right before applying the activation function.  The network
parameters are initialized randomly (drawn from uniform and normal
distribution). For network training the Adam algorithm (see
\cite{Kingma2015}, \cite[Chapter~8.5.3]{Goodfellow2016}) with a
learning rate of $0.005$ and a batch size of $256$ has been
used. Finally, the model hedge for the benchmark in
Section~\ref{subsec:benchmark} has been calculated using Quantlib.

\begin{remark} For the numerical experiments in this article the optimality criteria in \eqref{eq:Jdef} and \eqref{eq:JEntropic} are specified under a risk-neutral measure. Thus, an optimal hedging strategy is based on market anticipations of future prices. Alternatively, one could use a statistical measure. The algorithm presented here can be applied also in this case.
\end{remark}

\subsection{Benchmark: No transaction costs} \label{subsec:benchmark}
As a first example, we consider hedging without transaction costs in a
Heston model. In this example the risk measure $\rho$ is chosen as the
average value at risk (also called conditional value at risk or
expected shortfall), defined for any random variable $X$ by
\begin{equation} \rho(X):=
  \label{eq:AVar} \frac{1}{1-\alpha} \int_0^{1-\alpha} \mathrm{VaR}_{\gamma}(X) \d \gamma
\end{equation}
for some $\alpha \in [0,1)$, where
$\mathrm{VaR}_{\gamma}(X):=\inf \{m \in \R \, : \P(X<-m) \leq \gamma
\}$. An alternative representation of $\rho$ of type \eqref{eq:OCE} is
discussed in Example~\ref{ex:CVAR}. We refer to
\cite[Section~4.4]{Foellmer2016} for further details. Note that
different levels of $\alpha$ correspond to different levels of
risk-aversion, ranging from risk-neutral for $\alpha$ close to $0$ to
very risk-averse for $\alpha$ close to $1$. The limiting cases are
$\rho(X)=-\E[X]$ for $\alpha=0$ and
$\lim_{\alpha \uparrow 1} \rho(X)= -\mathrm{essinf}(X)$, see
\cite[p.234 and Remark~4.50]{Foellmer2016}.

\subsubsection*{A brief reminder on the Heston model}
Recall that a Heston model is specified by the stochastic differential
equations
\begin{equation}\label{eq:HestonModel}\begin{aligned} \d S^{1}_t & = \sqrt{V_t} S^{1}_t \d B_t, \quad \text{ for } t > 0 \text{ and } S^{1}_0 = s_0 \\
    \d V_t & = \alpha (b - V_t) \d t + \sigma \sqrt{V_t} \d W_t ,
    \quad \text{ for } t > 0 \text { and } V_0 =
    v_0, \end{aligned} \end{equation} where $B$ and $W$ are
one-dimensional Brownian motions (under a probability measure $\Q$)
with correlation $\rho \in [-1,1]$ and $\alpha$, $b$, $\sigma$, $v_0$
and $s_0$ are positive constants. Below we have chosen $\alpha=1$,
$b=0.04$, $\rho=-0.7$, $\sigma=2$, $v_0=0.04$ and $s_0=100$,
reflecting a typical situation in an equity market.

Here $S^{1}$ is the price of a liquidly tradeable asset and $V$ is
the (stochastic) variance process of $S^{1}$, modeled by a
Cox-Ingersoll-Ross (CIR) process. $V$ itself is not tradable directly,
but only through options on variance. In our framework this is modeled
by an idealized variance swap with maturity $T$, i.e. we set
$\Fc^{H}_t:=\sigma((S^{1}_s,V_s) \, : \, s \in [0,t])$ and
\begin{equation}\label{eq:varSwapPrice} S^{2}_t := \E_\Q
  \left[\left. \int_0^T V_s \dd s \right|\Fc^{H}_t\right], \quad t \in
  [0,T], \end{equation}
and consider $(S^{1},S^{2})$ as the prices of liquidly tradeable assets. 
A standard calculation\footnote{For example, one may use that $(\log(S^{1}),V)$ 
is an affine process to see that the conditional expectation in \eqref{eq:varSwapPrice}
 can be taken only with respect to $\sigma(V_t, s \in [0,t])$. 
 This conditional expectation can then be calculated by using the 
 SDE for $V$ or by directly inserting the expression from e.g. \cite[Section~3]{dufresne2001}.}
  shows that \eqref{eq:varSwapPrice} is given as
\begin{equation}\label{eq:varSwapPrice2} S^{2}_t = \int_0^t V_s \dd
  s + L(t,V_t) \end{equation}
where 
\[L(t,v)= \frac{v-b}{\alpha}(1-e^{-\alpha(T-t)})+b(T-t).
\]

Consider now a European option with payoff $g(S^{1}_T)$ at $T$ for
some $g \colon \R \to \R$. Its price (under $\Q$) at $t \in [0,T]$ is
given as $H_t:=\E_\Q[g(S^{1}_T)| \Fc^H_t]$. By the Markov property
of $(S^{1},V)$, one may write the option price at $t$ as
$H_t = u(t,S^{1}_t,V_t)$ for some
$u \colon [0,T]\times[0,\infty)^2 \to \R$. Assuming that $u$ is
sufficiently smooth, one may apply It\^{o}'s formula to $H$ and use
\eqref{eq:varSwapPrice2} to obtain
\begin{equation} \label{eq:replicationHeston} 
g(S^{1}_T)=q+ \int_0^T
  \delta^{1}_t \dd S^{1}_t + \int_0^T \delta^{2}_t \dd
  S^{2}_t \end{equation} 
where $q = \E_\Q[g(S^{1}_T)]$ and
\begin{equation}\label{eq:modelDeltaHeston}
  \delta^{1}_t:=\partial_s u(t,S^{1}_t,V_t) \text{ and } \delta^{2}_t:= \frac{\partial_v 
  u(t,S^{1}_t,V_t)}{\partial_v L(t,V_t)}.
\end{equation}
Thus, if continuous-time trading was possible,
\eqref{eq:replicationHeston} shows that the option payoff can be
replicated perfectly by trading in $(S^{1},S^{2})$ according to
the strategy \eqref{eq:modelDeltaHeston}.

\begin{remark}
  The strategy \eqref{eq:modelDeltaHeston} depends on $V_t$. Although
  not observable directly, an estimate can be obtained by estimating
  $\int_0^t V_s \dd s$ and solving \eqref{eq:varSwapPrice2} for $V_t$.
\end{remark}

\subsubsection*{Setting: Discretized Heston model}

In addition to the setting explained in detail in
Section~\ref{subsec:implementation}, here we set $d=2$, consider no
transaction costs (i.e. $C_T\equiv 0$) and generate sample trajectories
of the price process of the hedging instruments from a discretely
sampled Heston model. Thus, $S=(S_{0},\ldots,S_{n})$ and for any
$k=0,\ldots,n$, $S_{k} = (S_{k}^{1},S_{k}^{2})$ is given by
\eqref{eq:HestonModel} and \eqref{eq:varSwapPrice2} under $\Q$. The
sample paths of $S$ are generated by (exact) sampling from the
transition density of the CIR process (see
\cite[Section~3.4]{Glasserman2004}) and then using the (simplified)
Brodie-Kaya scheme (see \cite{Andersen2010} and
\cite{Broadie2006}).\footnote{This corresponds to replacing $V$ in the
  SDE for $S^{1}$ in \eqref{eq:HestonModel} by a piecewise constant
  process and the integral in \eqref{eq:varSwapPrice2} by a sum.}
Generating independent samples of $S$ according to this scheme can now
be viewed as sampling from a uniform distribution on a (huge) finite
probability space $\Omega$.\footnote{To be more precise, one replaces
  the normal distributions appearing in the simulation scheme for $S$
  by (arbitrarily fine) discrete distributions.} Thus, in the notation
of Section~\ref{subsec:implementation} one has $\P[\{\omega_m\}]=1/N$ for
all $m=1,\ldots,N$ with each $S(\omega_m)$ corresponding to a sample
of the Heston model generated as explained above.

If continuous-time trading was possible, any European option could be
replicated perfectly by following the strategy
\eqref{eq:modelDeltaHeston}. However, in the present setup the hedging
portfolio can only be adjusted at discrete time-points. Nevertheless
one may choose
$\delta_{k}^H:= (\delta^{1}_{k},\delta^{2}_{k})$ for
$k = 0\ldots n-1$ with $\delta^{1}, \delta^{2}$ defined by
\eqref{eq:modelDeltaHeston} and charge the risk-neutral price
$q$. This will be referred to as the model-delta hedging strategy (or
simply model hedge) and serves as a benchmark.

Finally, in order to compare the neural network strategies to this
benchmark, the network input is chosen as
$I_{k}=(\log(S_{k}^{1}),V_{k})$. One could also replace
$V_{k}$ by $S^{2}_{k}$ instead. The network structure at
time-step $t_k$ is illustrated in Figure~\ref{Fig0}.

\begin{figure}
  \centering
  \includegraphics[trim= 0mm 70mm 0mm 45mm,
  clip,width=0.8\textwidth]{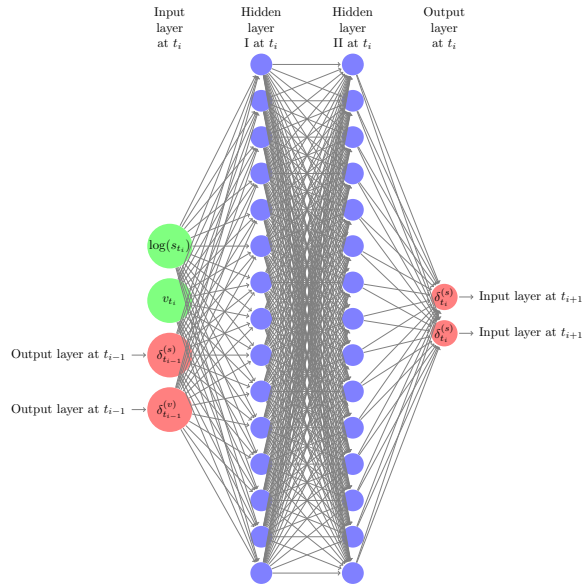}
  \caption{Recurrent network structure}
  \label{Fig0}
\end{figure}

\subsubsection*{Results}
We now compare the model hedge $\delta^H$ to the deep hedging
strategies $\delta^\theta$ corresponding to different
risk-preferences, captured by different levels of $\alpha$ in the
average value at risk \eqref{eq:AVar}.

As a first example, consider a European call option,
i.e. $Z=(S^{1}_T-K)^+$ with $K=s_0$. Following the methodology
outlined in Section~\ref{subsec:implementation}, we calculate a
(close-to) optimal parameter $\theta$ for \eqref{eq:objective1NN} with $X=-Z$ and
denote by $\delta^\theta$ and $p^\theta_0$ the (close-to) optimal
hedging strategy and value of \eqref{eq:objective1NN},
respectively. By definition of the indifference price
\eqref{eq:priceDef}, the approximation property
Proposition~\ref{lem:approx}, Proposition~\ref{lem:martingale} and
$\rho(0)=0$, $p^\theta_0$ is an approximation to the indifference price
$p(Z)$.  As an out-of-sample test, one can then simulate another set
of sample trajectories (here $10^6$) and evaluate the terminal hedging
errors $q - Z+(\delta^H \cdot S)_T$ (model hedge) and
$p^\theta_0 - Z+(\delta^{\theta} \cdot S)_T$ (CVar) on each of them. In
fact, since the risk-adjusted price $p^\theta_0$ is higher than the
risk-neutral price $q=1.69$ (as shown in
Proposition~\ref{lem:martingale}(ii)), for (CVar) we have evaluated
$q - Z+(\delta^{\theta} \cdot S)_T$, i.e. the hedging error from using
the optimal strategy associated to $\rho$, but only charging the
risk-neutral price $q$. This is shown in a histogram in
Figure~\ref{Fig1} for $\alpha=0.5$, yielding a risk-adjusted price
$p^\theta_0=1.94$.  As one can see, the hedging performance of
$\delta^H$ and $\delta^\theta$ is very similar. In particular
\begin{itemize}
\item for this choice of risk-preferences ($\rho$ as in
  \eqref{eq:AVar} with $\alpha = 0.5$) the optimal strategy in
  \eqref{eq:objective1} is close to the model hedge $\delta^H$,
\item the neural network strategy $\delta^\theta$ is able to
  approximate  well the optimal strategy in \eqref{eq:objective1}.
\end{itemize}
This is also illustrated by Figure \ref{Fig3}, where the strategies
$\delta_t^\theta$ and $\delta_t^H$ at a fixed time-point $t$ are
plotted conditional on $(S_t^{1},V_t) = (s,v)$ on a grid of values
for $(s,v)$. To make this last comparison fully sensible instead of
the recurrent network structure
$\delta_{k}^\theta = F^{\theta_k}(I_{k},\delta_{k-1}^\theta)$
here a simpler structure $\delta_{k}^\theta = F^{\theta_k}(I_{k})$
is used. The hedging performance for this simpler structure is,
however, very similar, see Figure~\ref{Fig5}. Of course, this is also
expected from \eqref{eq:modelDeltaHeston}.\footnote{For non-zero
  transaction costs this is not true anymore, i.e. the recurrent
  network structure is needed. For example, Figure~\ref{Fig6} is
  generated for precisely the same parameters as Figure~\ref{Fig5},
  except that $\alpha = 0.99$ and proportional transaction costs are
  incurred, i.e. \eqref{eq:propTransCosts} with
  $\varepsilon = 0.01$.}

\begin{figure}
  \centering
  \includegraphics[width=0.9\textwidth]{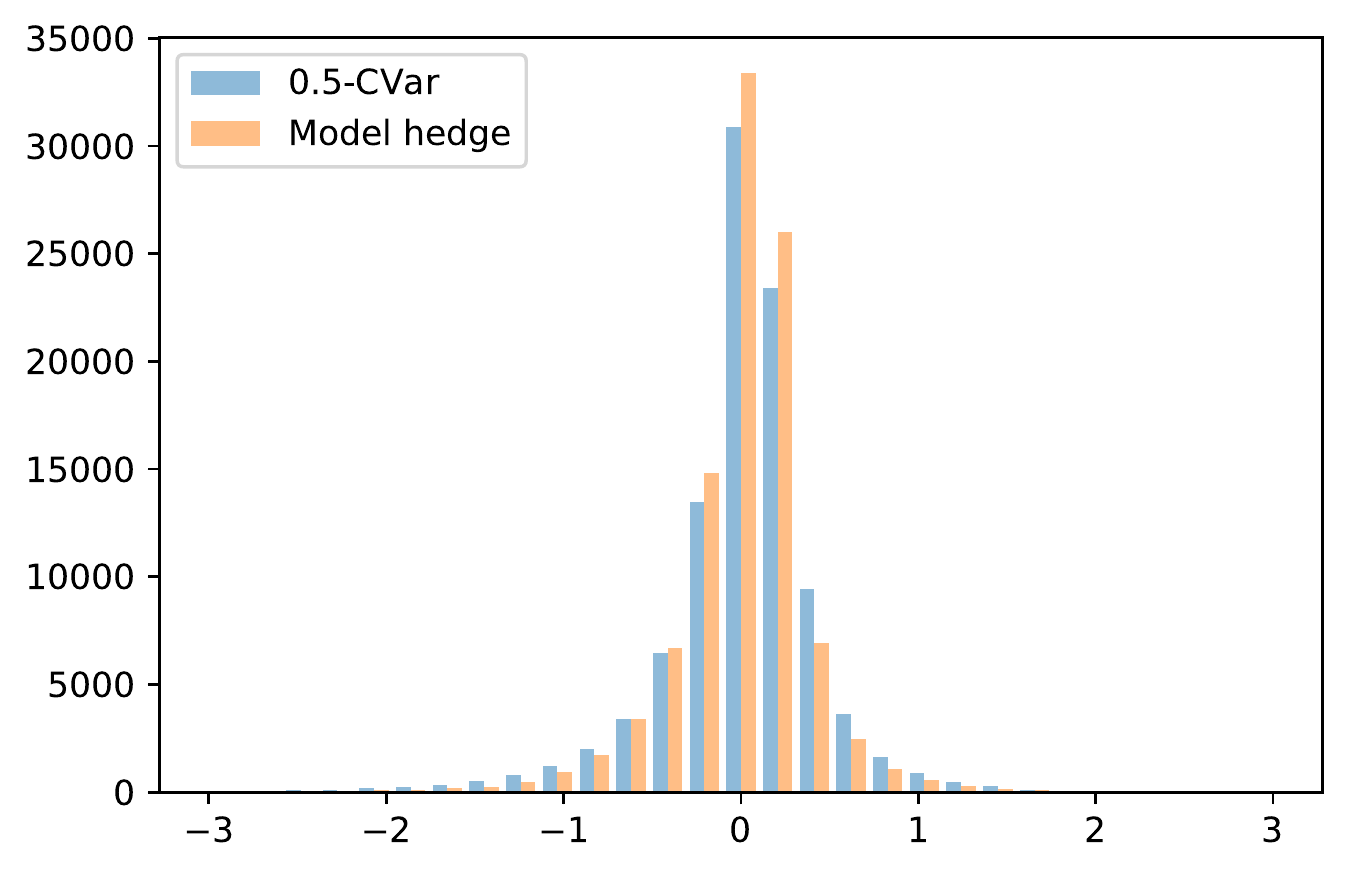}
  \caption{Comparison of model hedge and deep hedge associated to
    $50\%$-expected shortfall criterion.}
  \label{Fig1}
\end{figure}

\begin{figure}
  \centering
  \includegraphics[width=0.85\textwidth]{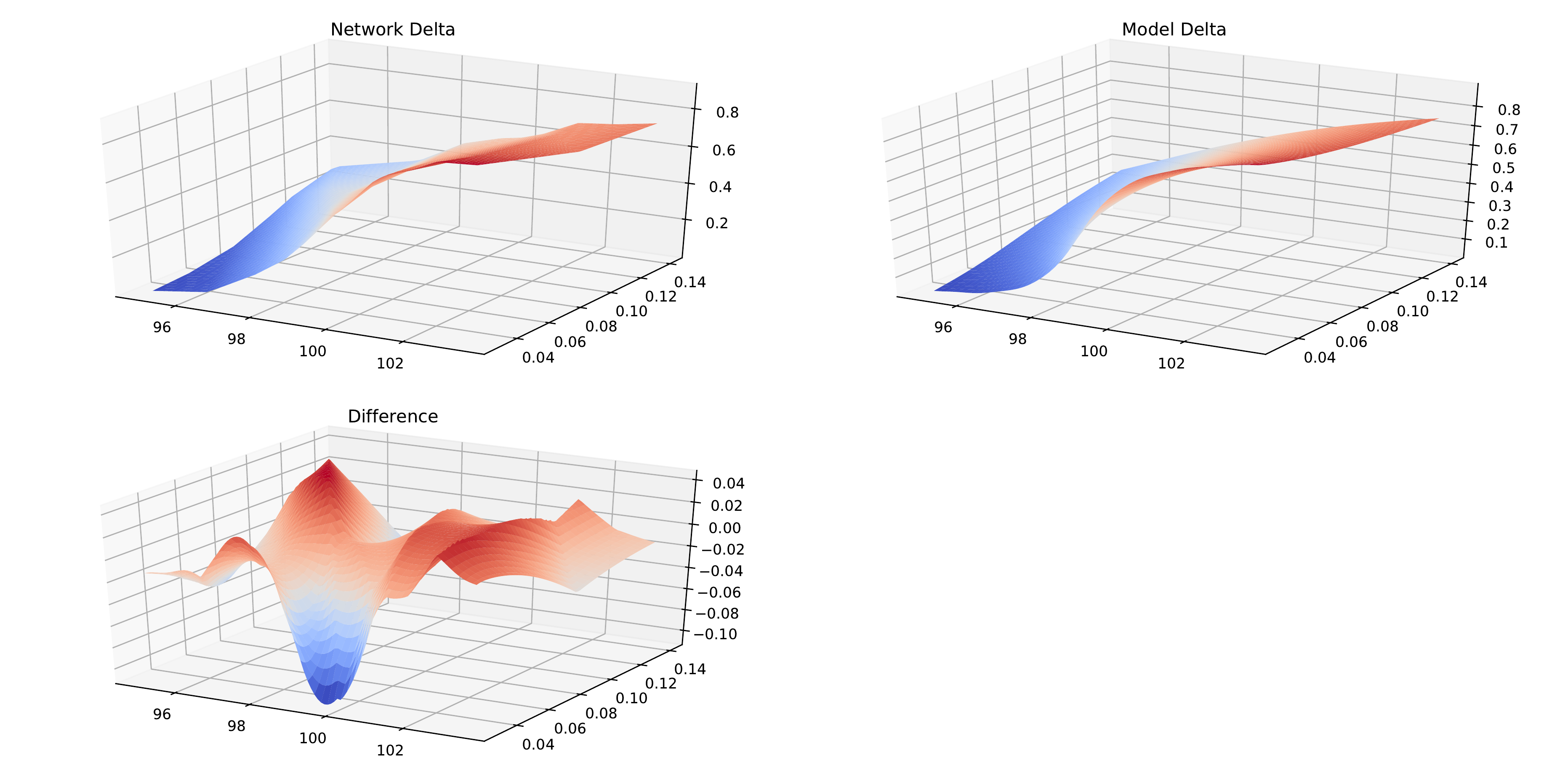}
  \caption{$\delta^{H,(1)}_t$ and neural network approximation as a
    function of $(s_t,v_t)$ for $t=15$ days}
  \label{Fig3}
\end{figure}

\begin{figure}
  \centering
  \includegraphics[width=0.85\textwidth]{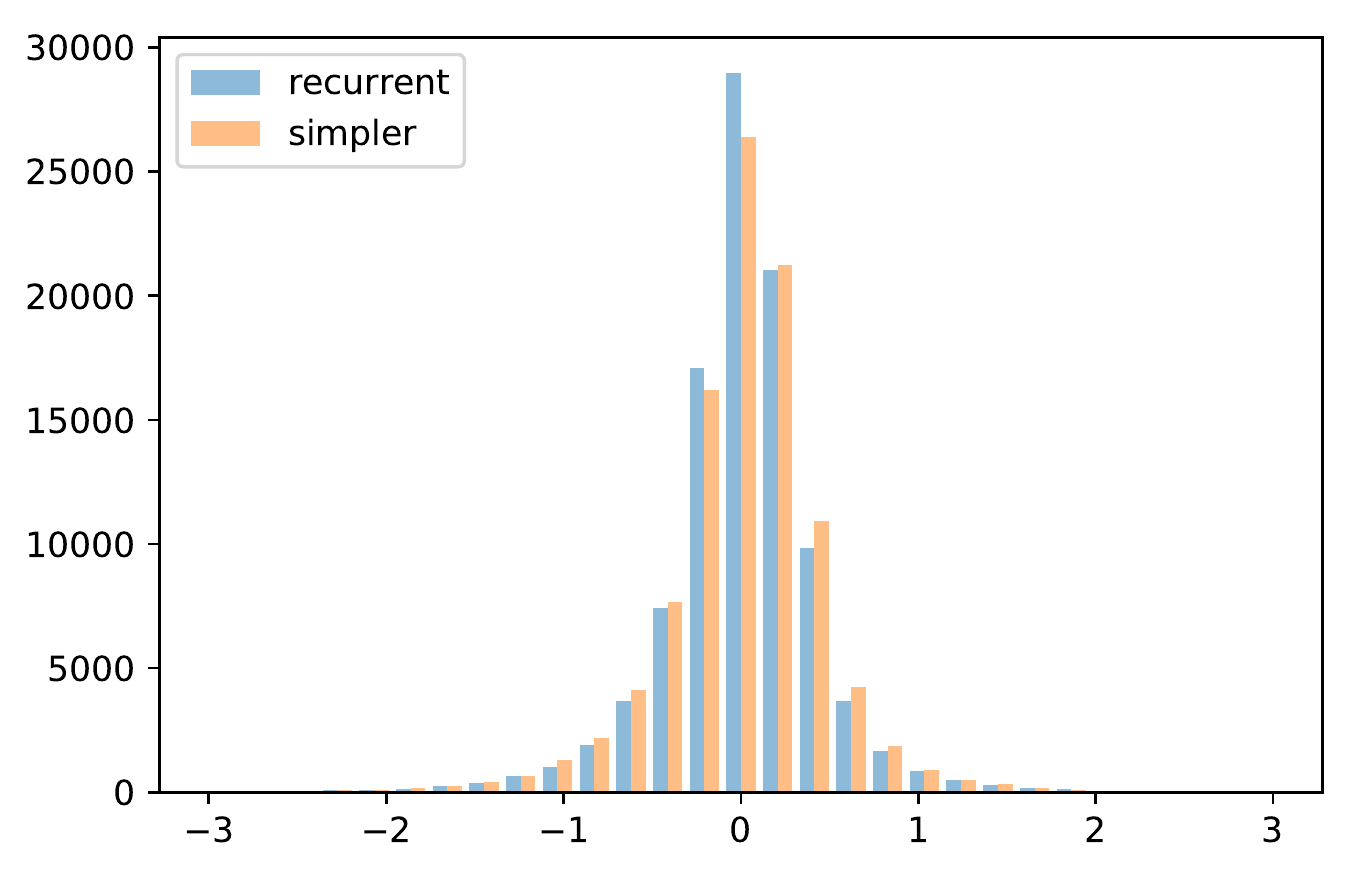}
  \caption{Comparison of recurrent and simpler network structure (no
    transaction costs).}
  \label{Fig5}
\end{figure}

\begin{figure}
  \centering
  \includegraphics[width=0.85\textwidth]{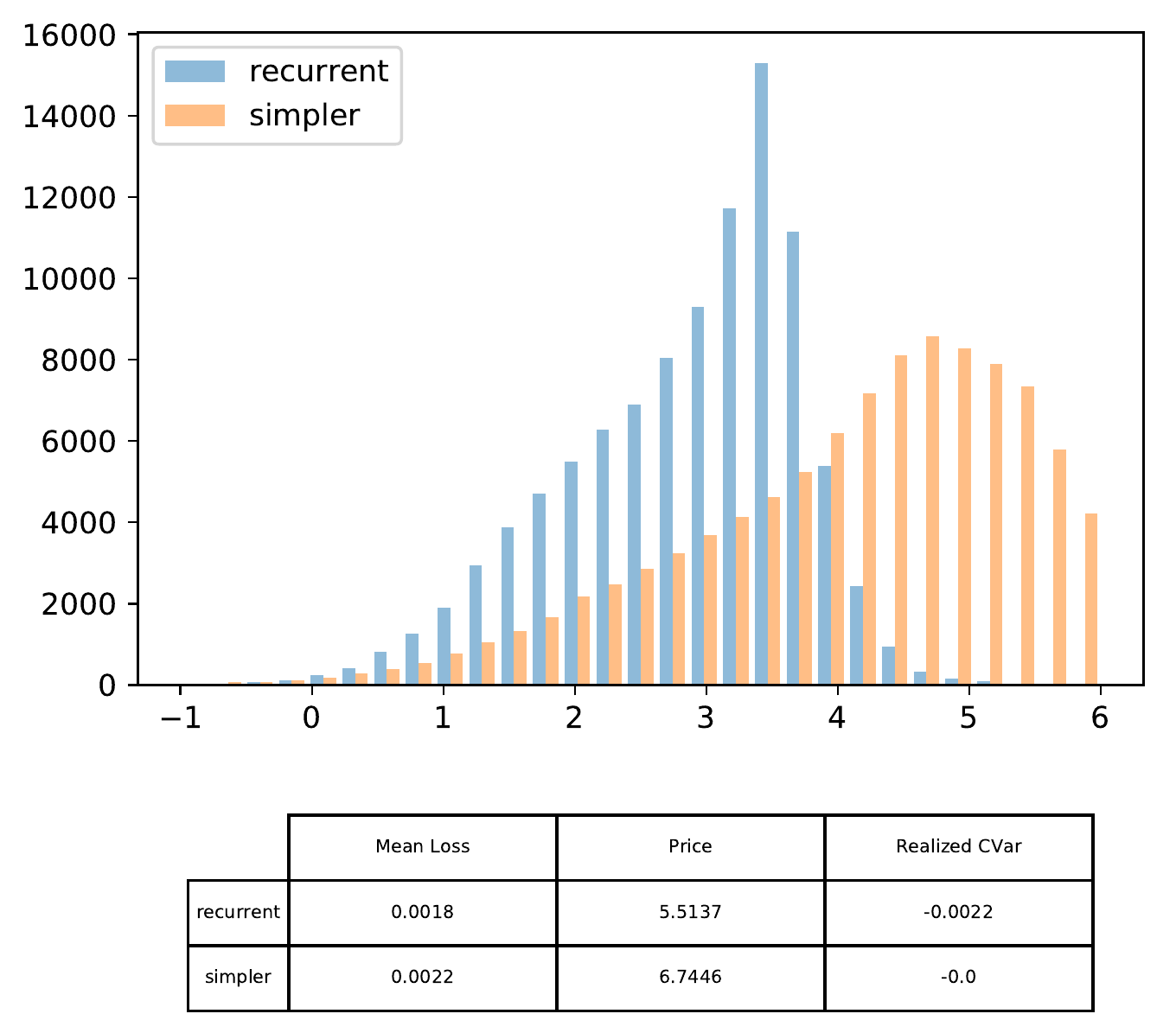}
  \caption{Network architecture matters: Comparison of recurrent and
    simpler network structure (with transaction costs and $99\%$-CVar
    criterion).}
  \label{Fig6}
\end{figure}

\begin{figure}
  \centering
  \includegraphics[width=0.75\textwidth]{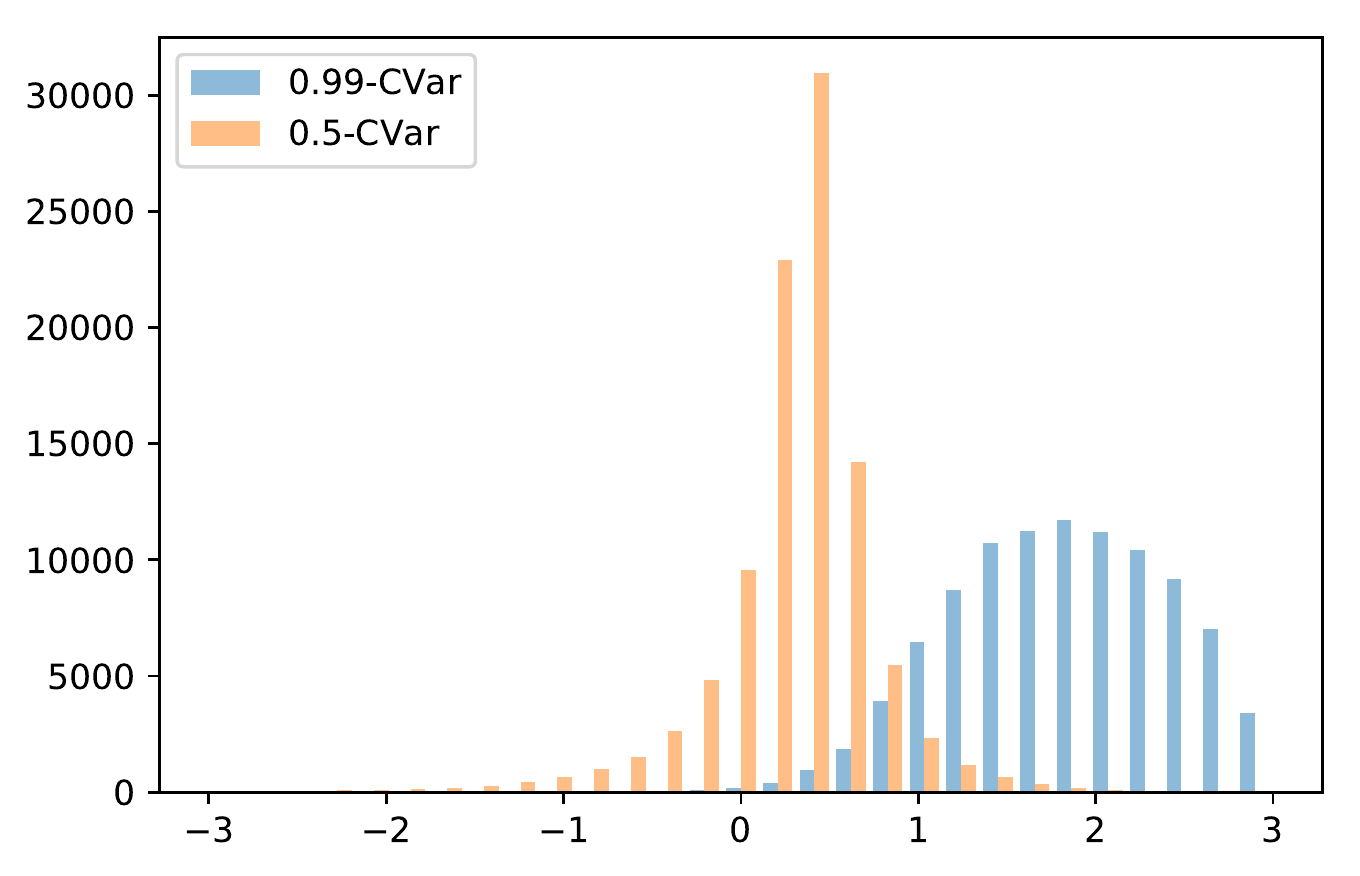}
  \caption{Comparison of 99\%-CVar and 50\%-CVar optimiality
    criterion.}
  \label{Fig31}
\end{figure}

\begin{figure}
  \centering
  \includegraphics[width=0.75\textwidth]{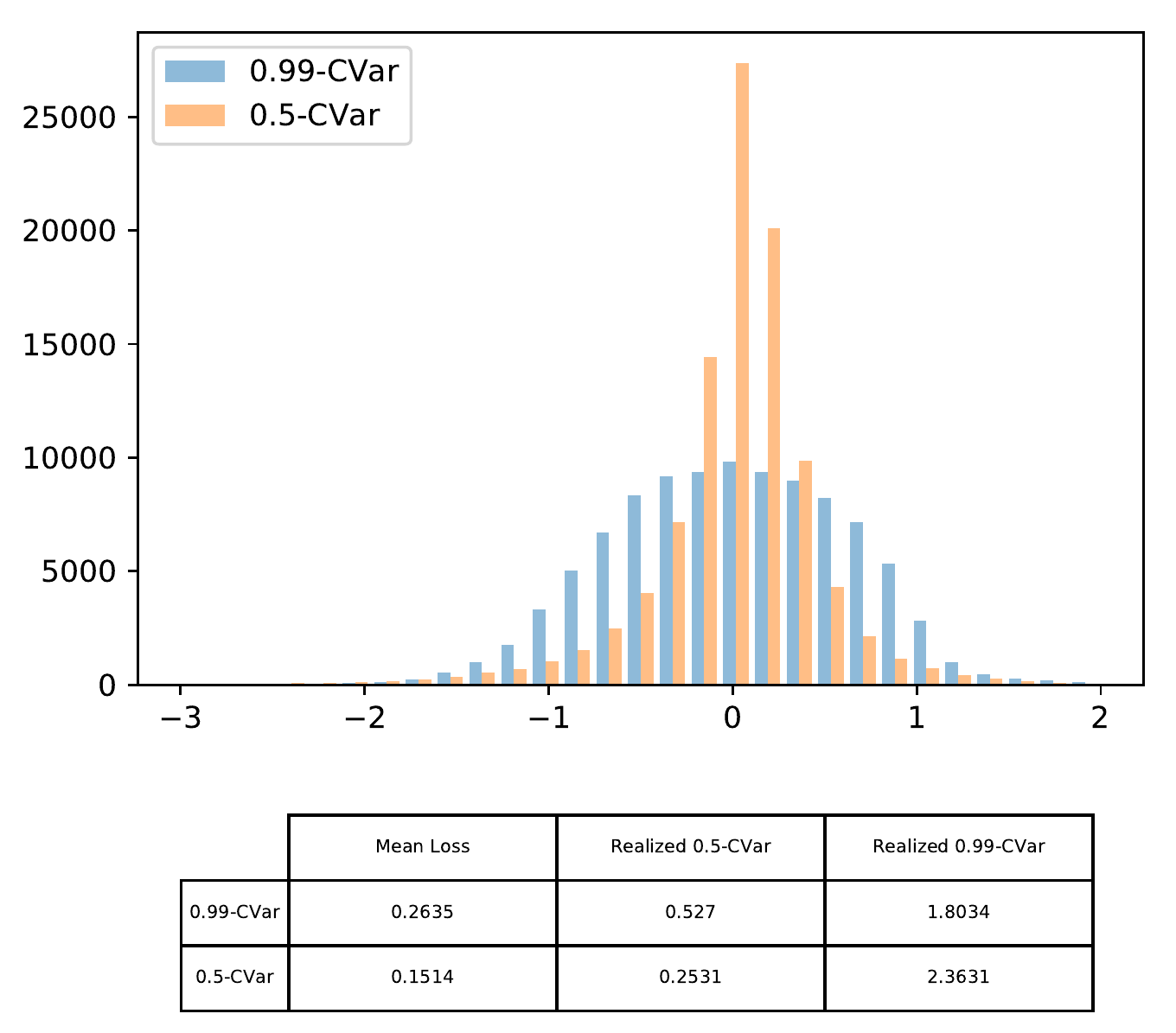}
  \caption{Comparison of 99\%-CVar and 50\%-CVar optimiality
    criterion, normalized to risk-neutral price.}
  \label{Fig32}
\end{figure}

A more extreme case is shown in Figure~\ref{Fig31}, where instead of
the model hedge the $99\%$-CVar criterion is used,
i.e. $\alpha=0.99$. This results in a significantly higher
risk-adjusted price $p^\theta_0=3.49$. If both the $50\%$ and
$99\%$-CVar optimal strategies are used, but only the risk-neutral
price is charged (see Figure~\ref{Fig32}) one can clearly see the risk
preferences: the $50\%$-CVar strategy is more centered at $0$ and also
has a smaller mean hedging error, but the $99\%$-expected shortfall
strategy yields smaller extreme losses (c.f. also the realized
$99\%$-CVar \textit{loss} value realized on the test sample, shown in
the table below Figure~\ref{Fig32}).

To further illustrate the implications of risk-preferences on hedging,
as a last example we consider selling a call-spread,
i.e. $Z=[(S^{1}_T-K_1)^+-(S^{1}_T-K_2)^+]/(K_2-K_1)$ for
$K_1 < K_2$. Here we have chosen $K_1=s_0$, $K_2 = 101$. Proceeding as
above, we compare the model hedge to the more risk-averse hedging
strategies associated to $\alpha=0.95$ and $\alpha=0.99$. The
strategies (on a grid of values for spot and variance) are shown in
Figures~\ref{FigSpread1} and \ref{FigSpread2}. The model hedge would
again correspond to $\alpha=0.5$. As one can see for higher levels of
risk-aversion, the strategy flattens. From a practical perspective,
this precisely corresponds to a barrier shift, i.e. a more risk-averse
hedge for a call spread with strikes $K_1$ and $K_2$ actually aims at
hedging a spread with strikes $\tilde{K}_1$ and $K_2$ for
$\tilde{K}_1 < K_1$.

\begin{figure}
  \centering
  \includegraphics[width=0.85\textwidth]{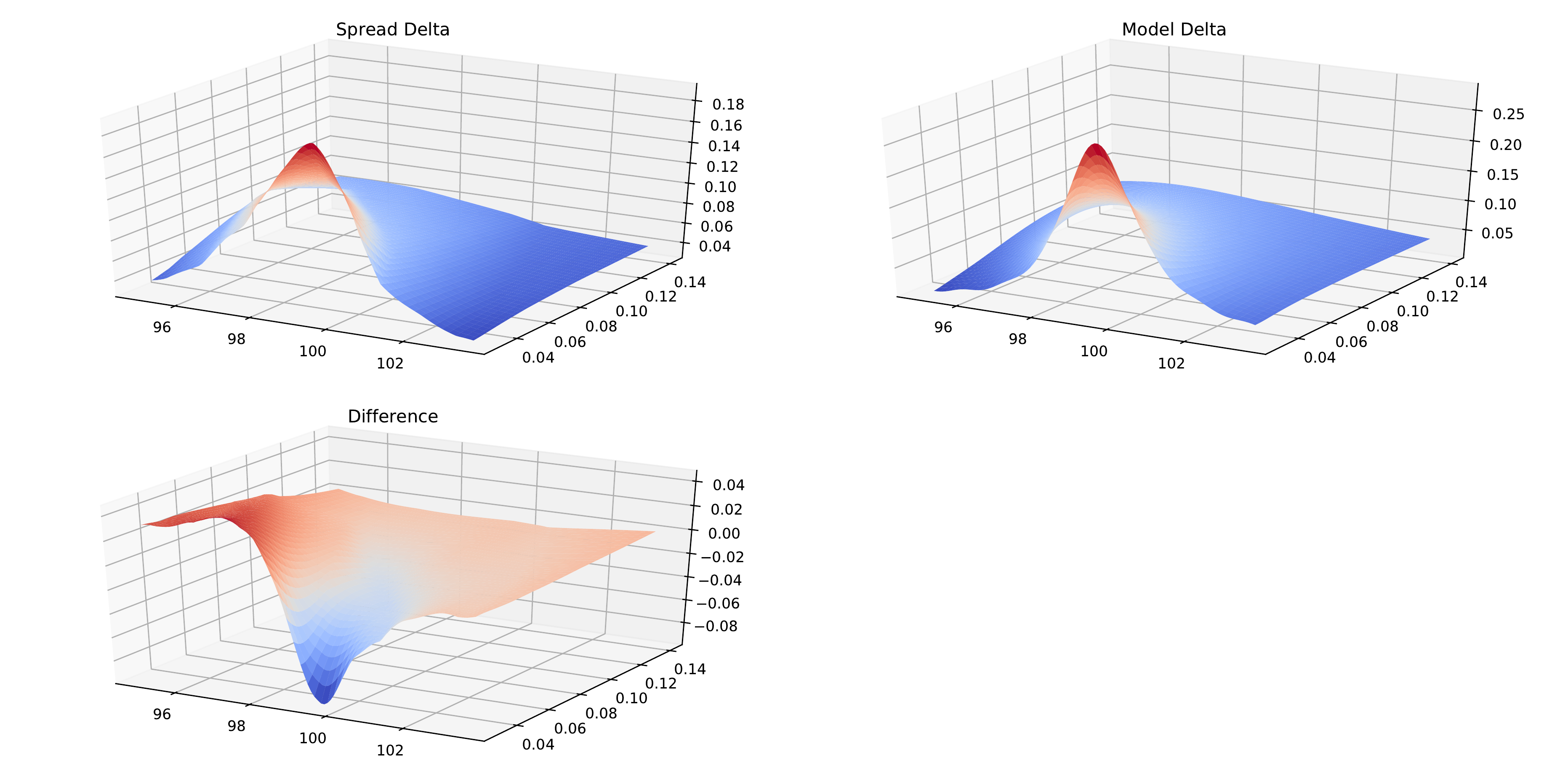}
  \caption{Call spread $\delta^{H,(1)}_t$ and neural network
    approximation as a function of $(s_t,v_t)$ for $t=15$ days}
  \label{FigSpread1}
\end{figure}

\begin{figure}
  \centering
  \includegraphics[width=0.85\textwidth]{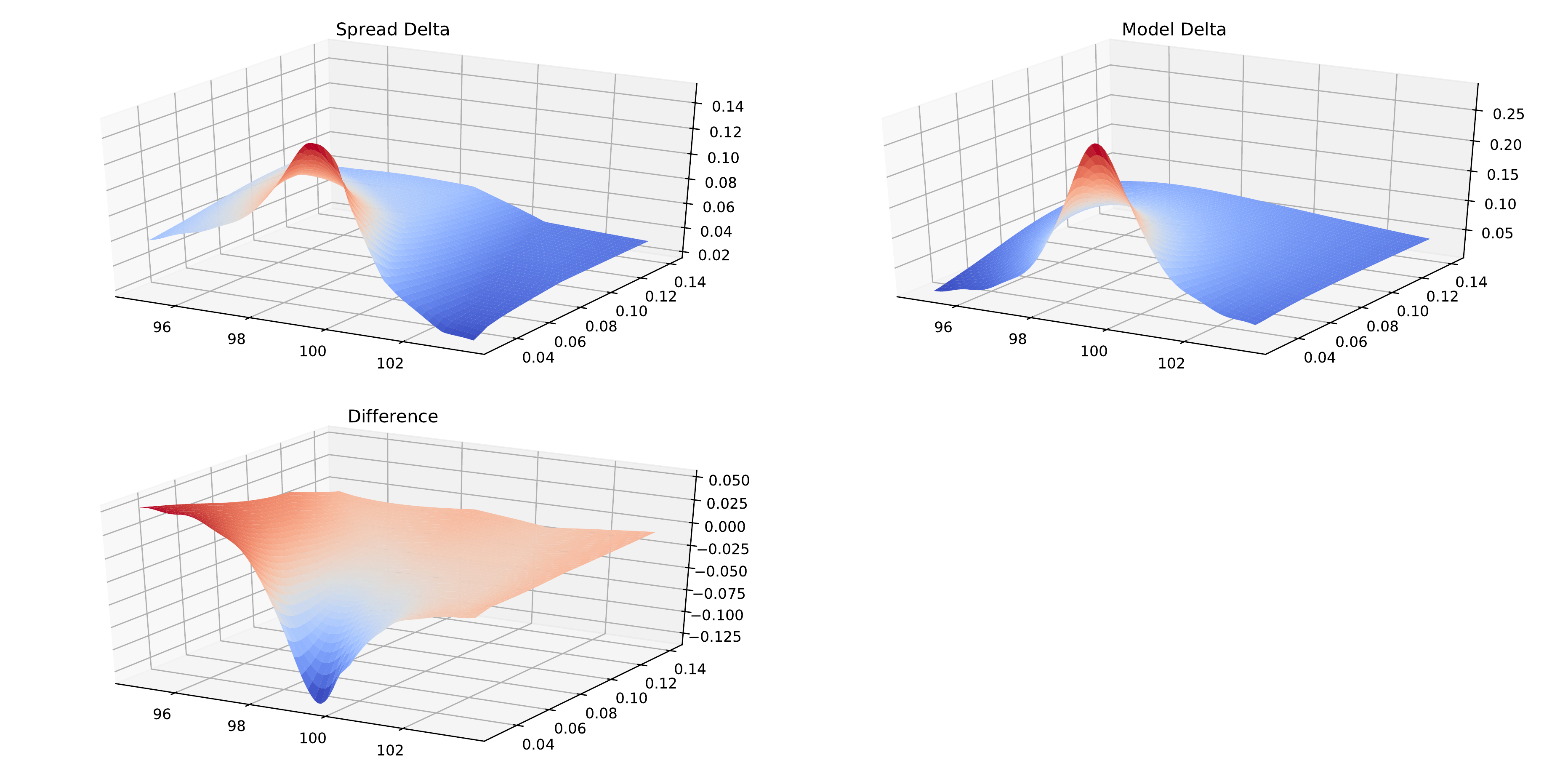}
  \caption{Call spread $\delta^{H,(1)}_t$ and neural network
    approximation as a function of $(s_t,v_t)$ for $t=15$ days}
  \label{FigSpread2}
\end{figure}

\subsection{Price asymptotics under proportional transaction
  costs} \label{subsec:asymptotics} In Section~\ref{subsec:benchmark}
we have seen that in a market without transaction costs, deep hedging
is able to recover the model hedge and can be used to calculate
risk-adjusted optimal hedging strategies.

The goal of this section is to illustrate the power of the methodology
by numerically calculating the indifference price \eqref{eq:priceDef}
in a multi-asset market with transaction costs.

So far, this has been regarded a highly challenging problem, see
e.g. the introduction of \cite{Kallsen2015}. For example, calculating
the exponential utility indifference price for a call option in a
Black-Scholes model involves solving a multidimensional nonlinear free
boundary problem, see e.g. \cite{Hodges1989},
\cite{Davis1993}. Motivated by this \cite{Whalley1997} have studied
asymptotically optimal strategies and price asymptotics for small
proportional transaction costs, i.e. for 
\begin{equation} \label{eq:propTransCosts} c_k(\mathrm{n})=\sum_{i=1}^d \varepsilon |\mathrm{n}^{i}| S^{i}_k
\end{equation}
and as $\varepsilon \downarrow 0$. One of the results in the
asymptotic analysis is that
\begin{equation}\label{eq:priceAsymptotics} p_\varepsilon - p_0 =
  O(\varepsilon^{2/3}),\quad \text{ as } \varepsilon \downarrow
  0, \end{equation}
where $p_\varepsilon=p_\varepsilon(Z)$ is the utility indifference price of $Z$ associated to transaction costs of size $\varepsilon$. In fact \eqref{eq:priceAsymptotics} is true in more general one-dimensional models, see \cite{Kallsen2015}, and the rate $2/3$ also emerges in a variety of related problems with proportional transaction costs, see e.g. \cite{Rogers2004}, \cite{MuhleKarbe2017} and the references therein.

Here we numerically verify \eqref{eq:priceAsymptotics} using the deep
hedging algorithm, first for a Black-Scholes model (for which
\eqref{eq:priceAsymptotics} is known to hold) and then for a Heston
model (with $d=2$ hedging instruments). For this latter case (or any
other model with $d>1$) there have been neither numerical nor
theoretical results on \eqref{eq:priceAsymptotics} previously in the
literature.

\subsubsection*{Black-Scholes model}
Consider first $d=1$ and $S_t=s_0 \exp(-t \sigma^2/2 + \sigma W_t)$,
where $\sigma > 0$ and $W$ is a one-dimensional Brownian motion. We
choose $\sigma = 0.2$, $s_0 = 100$ and use the explicit form of $S$ to
generate sample trajectories. Setting $I_{k} = \log(S_{k})$ and
proceeding precisely as in the Heston case (see
Sections~\ref{subsec:implementation} and \ref{subsec:benchmark}), we
may use the deep hedging algorithm to calculate the exponential
utility indifference price $p_{\varepsilon}$ for different values of
$\varepsilon$. Recall that we choose proportional transaction costs \eqref{eq:propTransCosts} and $\rho$ is the entropic risk measure
\eqref{eq:entropic} (see Lemma~\ref{lem:expIP}). For the numerical
example we take $\lambda=1$ and $Z=(S_T-K)^+$ with $K=s_0$ and we
calculate $p_{\varepsilon_i}$ for $\varepsilon_i=2^{-{i+5}}$,
$i=1,\dots,5$.

Figure~\ref{FigAsymptoticsBS} shows the pairs
$(\log(\varepsilon_i),\log(p_{\varepsilon_i}-p_0))$ (in red) and the
closest (in squared distance) straight line with slope $2/3$ (in
blue). Thus, in this range of $\varepsilon$ the relation
$\log(p_\varepsilon-p_0)= 2/3 \log(\varepsilon) + C$ for some
$C \in \R$ indeed holds true and hence also
\eqref{eq:priceAsymptotics}.

Note that trading is only possible at discrete time-points and so the
indifference price and the risk-neutral price do not coincide. Since
\eqref{eq:priceAsymptotics} is a result for continuous-time trading
(where $q=p_0$), we have compared to the risk-neutral price $q$ here
(thus neglecting the discrete-time friction in $p_\varepsilon$ for
$\varepsilon > 0$).

\subsubsection*{Heston model}
We now consider a Heston model with two
hedging instruments, i.e. $d=2$ and the setting is precisely as in
Section~\ref{subsec:benchmark}, except that here $\rho$ is chosen as
\eqref{eq:entropic} and proportional transaction costs \eqref{eq:propTransCosts} are incurred. Choosing
$\lambda=1$, $Z=(S_T^{1}-K)^+$ and $\varepsilon_i$ as in the
Black-Scholes case above, one can again calculate the exponential
utility indifference prices and show the difference to $p_0$ in a
log-log plot (see above) in a graph. These are shown as red dots in
Figure~\ref{FigAsymptoticsHeston}. Here the blue line in
Figure~\ref{FigAsymptoticsHeston} is the regression line, i.e. the
least squares fit of the red dots. The rate is very close to $2/3$ and
so it appears that the relation \eqref{eq:priceAsymptotics} also holds
in this case.

\begin{figure}
  \centering
  \includegraphics[width=0.85\textwidth]{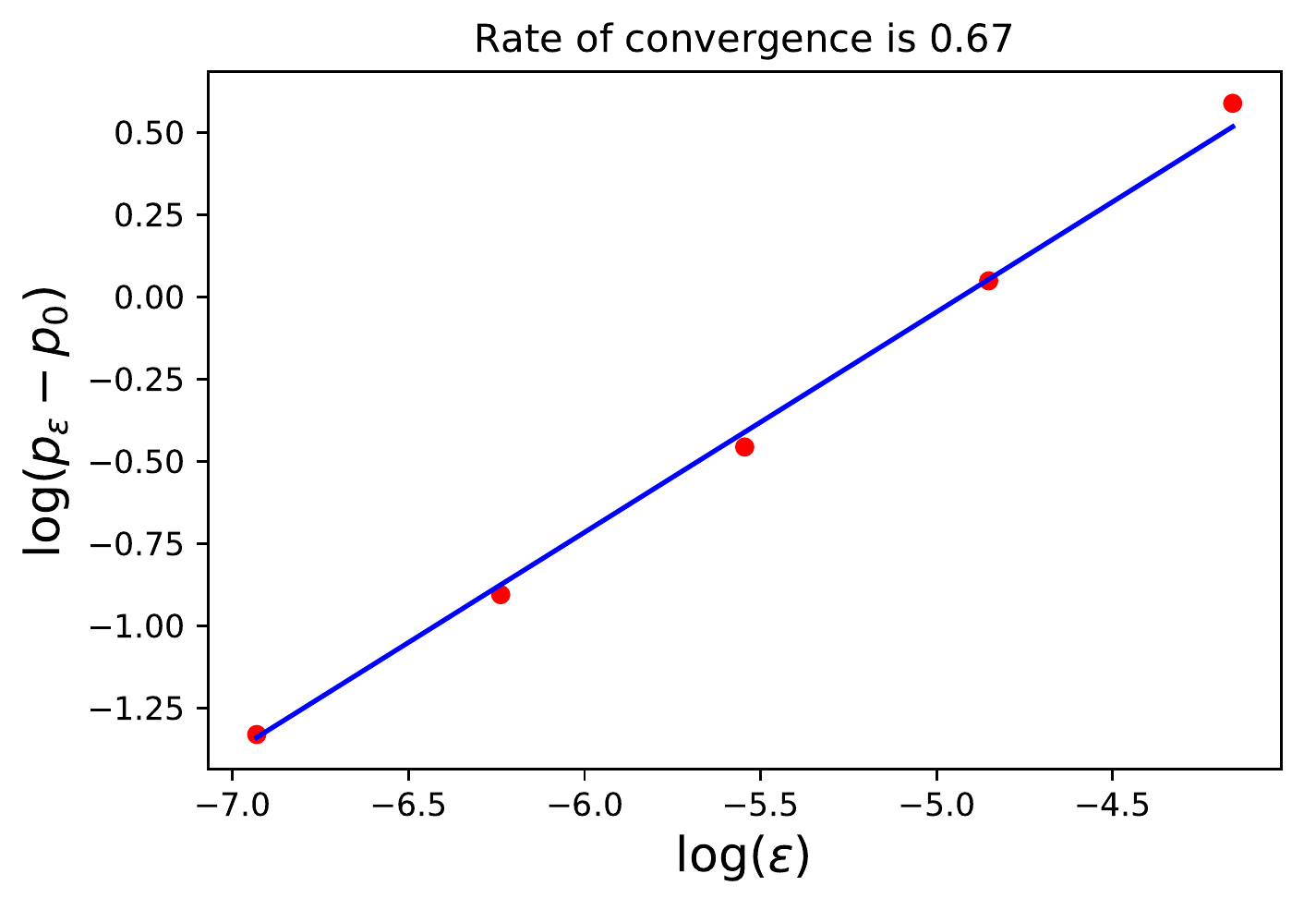}
  \caption{Black-Scholes model price asymptotics.}
  \label{FigAsymptoticsBS}
\end{figure}

\begin{figure}
  \centering
  \includegraphics[width=0.85\textwidth]{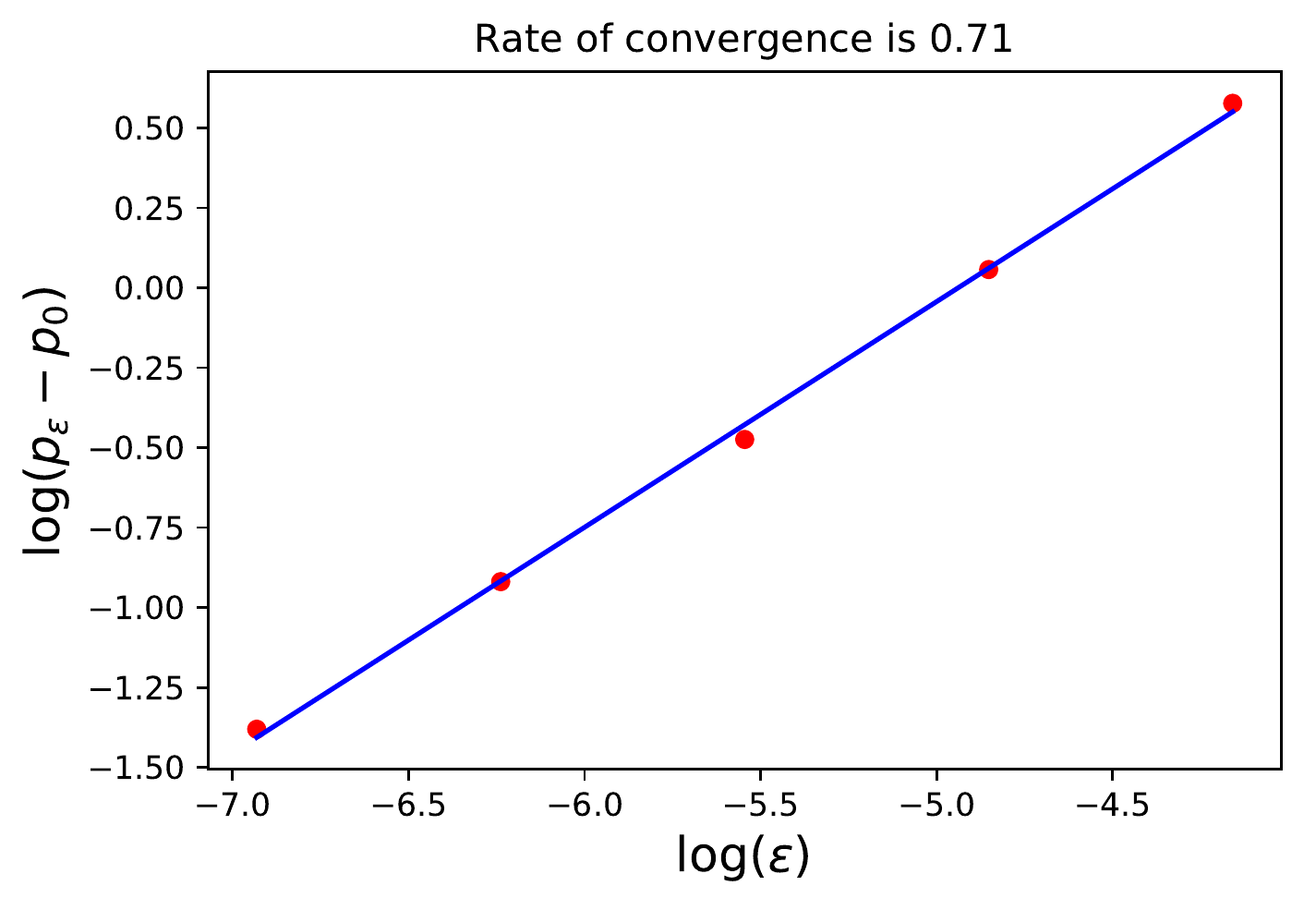}
  \caption{Heston model price asymptotics}
  \label{FigAsymptoticsHeston}
\end{figure}

\subsection{High-dimensional example} \label{subsec:highD} As a last
example consider a model built from $5$ separate Heston models,
i.e. $d=10$ and $(S^{h},S^{h+1})$ is the price process of spot and
variance swap in a Heston model (specified by \eqref{eq:HestonModel}
and \eqref{eq:varSwapPrice2}) for $h=1,\ldots,5$. To have a benchmark
at hand the $5$ models are assumed independent and each of them has
parameters as specified in Section~\ref{subsec:benchmark}. This choice
is of course no restriction for the algorithm and is only made for
convenience. The payoff is a sum of call options on each of the
underlyings, i.e. $Z = \sum_{h=1}^5 Z_h$ with
$Z_h = (S_T^{2h-1}-K)^+$ and $K=s_0=100$.  In a market with
continuous-time trading and no transaction costs, $Z$ can be
replicated perfectly by trading according to strategy
\eqref{eq:modelDeltaHeston} in each of the models. In particular, this
strategy is decoupled, i.e. the optimal holdings in
$(S^{h},S^{h+1})$ only depend on $(S^{(h)},S^{(h+1)})$. While in
the present setup trading is only possible at discrete time steps and
so the strategy optimizing \eqref{eq:objective1}, where $X=-Z$, leads to a
non-deterministic terminal hedging error \eqref{eq:terminalPL}, by
independence one still expects that the optimal strategy is decoupled
as above, at least for certain classes of risk measures. To see this
most prominently, here we consider \textit{variance optimal hedging}:
the objective is chosen as \eqref{eq:objective2} for $\ell(x)=x^2$ and
$p_0 = 5 q$, where $q = \E[Z_1]$.

Let $\delta \in \H$ and write
$\delta^{(2h-1:2h)}:=(\delta^{2h-1},\delta^{2h})$ for
$h=1,\ldots,5$ (and analogously for $S$).  If $\delta$ is decoupled,
i.e. such that $\delta^{(2h-1:2h)}$ is independent of $S^{(2j-1:2j)}$
for $j \neq h$, then by independence and since $S$ is a martingale one
has
\begin{equation}\label{eq:5Heston} \E \left[(-Z + p_0+(\delta \cdot S)_T)^2
  \right] = \sum_{h=1}^5 \mathrm{Var} \left(-Z_i +(\delta^{(2h-1:2h)} \cdot
    S^{(2h-1:2h)})_T\right).\end{equation}
By building $\delta$ from the (discrete-time) variance optimal strategies for each of the $5$ models, one sees from \eqref{eq:5Heston} that the minimal value of \eqref{eq:objective2} over \textit{all} $\delta \in \H$ is at most $5$ times the minimal value of \eqref{eq:objective2} associated to a single Heston model. This consideration serves as a guideline for assessing the approximation quality of the neural network strategy. 

To assess the scalability of the algorithm, we now calculate the
close-to-optimal neural network hedging strategy associated to
\eqref{eq:objective2} in both instances (i.e. for $n_H = 5$ models and
for a single one, $n_H = 1$) and compare the results.  Unless
specified otherwise, the parameters are as in
Section~\ref{subsec:implementation}. Since for $n_H = 5$ we are
actually solving $5$ problems at once, we allow for a network with
more hidden nodes by taking $N_1=N_2=12 n_H$. We then train both
networks for a fixed number of time-steps (here $2\times 10^5$) and
measure the performance in terms of both training time and realized
loss (evaluated on a test set of $n_H \times 10^5$ sample paths): the
training times on a standard Lenovo X1 Carbon laptop are $5.75$ and
$2.1$ hours for $n_H = 5$ and $n_H =1$, respectively and the realized
losses are $1.13$ and $0.20$. In view of the considerations above,
this indicates that the approximation quality is roughly the same for
both instances (and close-to-optimal).

While far from a systematic study, this last example nevertheless
demonstrates the potential of the algorithm for high-dimensional
hedging problems.

\section{Disclaimer}

{\tiny
Opinions and estimates constitute our judgement as of the date of this Material, are for informational purposes only and are subject to change without notice.  This Material is not the product of J.P. Morgan’s Research Department and therefore, has not been prepared in accordance with legal requirements to promote the independence of research, including but not limited to, the prohibition on the dealing ahead of the dissemination of investment research. This Material is not intended as research, a recommendation, advice, offer or solicitation for the purchase or sale of any financial product or service, or to be used in any way for evaluating the merits of participating in any transaction. It is not a research report and is not intended as such. Past performance is not indicative of future results. Please consult your own advisors regarding legal, tax, accounting or any other aspects including suitability implications for your particular circumstances.  J.P. Morgan disclaims any responsibility or liability whatsoever for the quality, accuracy or completeness of the information herein, and for any reliance on, or use of this material in any way.  \\
\noindent
 Important disclosures at: www.jpmorgan.com/disclosures}

 \bibliographystyle{amsalpha}

\end{document}